\newcommand{\limi}[1]{\;\dot{#1}\;}
\newcommand{\outage}[1]{P_{\rm out}(\rho,#1)}
\newtheorem{thm}{\it{Theorem}}
\newtheorem{lem}{Lemma}
\title{Time-Out Lattice Sequential Decoding for the MIMO ARQ Channel}
\author{
\authorblockN{Walid Abediseid and Mohamed Oussama Damen}
\authorblockA{Department of Elect. \& Comp. Engineering\\
University of Waterloo \\
Waterloo, Ontario N2L 3G1 \\
wabedise@engmail.uwaterloo.ca, modamen@ece.uwaterloo.ca}
}
\begin{document}
\maketitle
\begin{abstract}
The optimal diversity-multiplexing-delay tradeoff for the multi-input multi-output (MIMO) automatic repeat
request (ARQ) channel can be achieved using incremental redundancy lattice space-time codes coupled with a list decoder for joint error detection and correction. Such a decoder is based on the minimum mean-square error lattice decoding principle which is implemented using sphere decoding algorithms. However, sphere decoders suffer from high computational complexity for low-to-moderate signal-to-noise ratios, especially for large signal dimensions. In this paper, we would like to construct a more efficient decoder that is capable of achieving the optimal tradeoff with much lower complexity. In particular, we will study
the throughput-performance-complexity tradeoffs in sequential decoding algorithms and the
effect of preprocessing and termination strategies. We show, analytically and via simulation, that using the \textit{lattice sequential decoder} that implements a time-out algorithm for joint error detection and correction, the optimal tradeoff of the MIMO ARQ channel can be achieved with significant reduction in decoding complexity.
\end{abstract}
\section{Introduction}
Automatic Repeat reQuest (ARQ) is an efficient communication strategy that uses feedback to achieve high reliability, and is widely used in many wireless networks (e.g., LTE and WiMAX) (refer to~\cite{0} for a detailed study about several ARQ schemes). In its early stages, ARQ was used in conjunction  with codes with good error detection capabilities. However, such codes increase the number of retransmissions which significantly reduce transmission rate (throughput) and increase delay. This may become undesirable for many communication systems, particularly in wireless fading channels. To overcome such problems, hybrid-ARQ system was introduced which uses forward error correction techniques (e.g., block and convolutional codes) \cite{1}--\cite{4}. This, however, comes at the expense of increasing the complexity of the receiver. The design of low complexity receivers for ARQ systems that achieve \textit{near} optimal performance and high throughput is considered a challenging problem.

The class of sequential decoders is among the most promising decoders that can handle high data rates with low decoding complexity. Sequential decoders \cite{Wozen}, \cite{Label5} are tree search algorithms that were originally constructed to decode convolutional codes transmitted via discrete memoryless channel, and are well-known to achieve near optimal performance with very low (average) decoding complexity. However, there is still a non-zero probability that the decoding complexity (time) becomes excessive, especially when the channel is very noisy. In this situation, the decoder encounters buffer overflow that results in a decoding failure. It is this probability that limits the performance of the sequential decoder. Fortunately, the decoding failure probability can be totally eliminated using systems with feedback channel. For that reason, sequential decoders were adopted with ARQ systems \cite{2}--\cite{Ort} due to their ability to detect for retransmission before ending the decoding search which results in huge saving in decoding complexity while maintaining high throughput. All of this makes sequential decoding very promising and attractive for use in systems with repeat request.

Many sequential decoding algorithms (e.g., stack algorithm \cite{Label5}) were modified for the use of signal detection and decoding in ARQ systems. Among those algorithms that is considered simple but efficient is the so-called \textit{time-out} sequential decoding. In this algorithm, the decoder simply tracks the number of computations performed by the decoder and asks for retransmission if the computations become excessive and exceed a certain predetermined time limit. This results in reducing the decoding complexity by terminating the search during high channel noise. For the case of single-input single-output discrete memoryless channel, it was shown (see \cite{2}) that there exists an optimal time-out limit value that maximizes both performance and throughput while achieving low decoding complexity. In this paper, we would like to extend the work in \cite{2} to the multi-input multi-output (MIMO) ARQ channel, particularly, to the quasi-static, Rayleigh-fading MIMO ARQ channel. In particular, we will study the throughput-performance-complexity tradeoffs in sequential decoding algorithms and the effect of preprocessing and termination strategies such as the time-out algorithm.

Diversity-multiplexing trade-off (a result that has been developed by Zheng and Tse \cite{Label8} which shows a rigorous fundamental tradeoff between the data rate increase possible via \textit{multiplexing} versus the channel error probability reduction possible via \textit{diversity}) has become a standard tool to evaluate the asymptotic performance of the MIMO channels at the high signal-to-noise ratio (SNR) regime. Space-time coding, a powerful coding technique, is used to exploit such tradeoff \cite{Tarokh2}--\cite{Label2}. In MIMO ARQ systems, the delay introduced by the channel provides a third dimension in the tradeoff region. To achieve the optimal \textit{diversity-multiplexing-delay} tradeoff of the MIMO ARQ channel, El Gamal \textit{et. al.} \cite{Label1} proposed an efficient coding scheme called Incremental Redundancy LAttice Space-Time (IR-LAST) coding. These lattice-based construction of space-time codes were designed using linear random coding techniques \cite{Label2}. The problem of constructing explicit optimal IR-LAST codes for the above mentioned MIMO ARQ channel was discussed in~\cite{Label3}. However, in both papers, a list lattice decoder for joint error detection and correction (usually implemented via sphere decoding algorithms \cite{Label10}) is an essential part for achieving the optimal tradeoff. The draw back of using such a decoder is that at low-to-moderate SNR and for large signal dimensions, the size of the candidate list could become extensively large. This motivates us to search for a more efficient joint decoding technique that is capable of achieving the optimal tradeoff with a fairly low decoding complexity.

In this paper, we propose a different approach for joint detection and decoding based on the efficient \textit{lattice} stack sequential decoder~\cite{Label5}. Such a decoder was developed in~\cite{Label7} for solving the closest lattice point search problem, which is related to the optimum decoding rule in MIMO channels. We implement the time-out algorithm at the decoder to predict in advance the occurrence of high channel noise. This results in less wasted time trying to decode a noisy signal and hence improving upon decoding complexity.  We show, analytically and via simulation, that the optimal tradeoff can be achieved using such a decoder with significant reduction in (average) decoding complexity.

The organization of this paper is as follows. In Section II we introduce our system model and review basic concepts of lattices and the IR-LAST coding scheme. In Section III, we briefly describe the mechanism of lattice sequential decoder and derive lower and upper bounds for the performance of such decoder from a lattice point of view. These bounds are considered the primary elements in constructing our new MIMO ARQ decoder. In section IV, we introduce the time-out algorithm and prove its optimality in terms of the diversity-multiplexing-delay tradeoff. Finally, we verify our theoretical results using simulation in section V.

Throughout the paper, we use the following notation. The superscript $^c$ denotes complex quantities, $^\mathsf{T}$ denotes transpose, and $^\mathsf{H}$ denotes Hermitian transpose. We refer to $g(z)\limi{=}z^a$ as $\lim_{z\rightarrow\infty}g(z)/\log(z)=a$, $\dot{\geq}$ and $\dot{\leq}$ are used similarly. For a bounded Jordan-measurable region $\mathcal{R}\subset\mathbb{R}^m$, $V(\mathcal{R})$ denotes the volume of $\mathcal{R}$. We denote $\mathcal{S}_m(r)$ by the $m$-dimensional hypersphere of radius $r$ with $V(\mathcal{S}_m(r))=(\pi r^2)^{m/2}/\Gamma(m/2+1)$, where $\Gamma(x)$ denotes the Gamma function. Also, $\pmb{I}_m$ denotes the $m\times m$ identity matrix, and $\otimes$ denotes the Kronecker product. The complement of a set $\mathcal{A}$ is denoted by $\overline{\mathcal{A}}$.

\section{System Model}
\subsection{ARQ MIMO Channel}
Consider a MIMO ARQ system with $M$-transmit and $N$-receive antennas, a maximum of $L$ rounds, no channel state information (CSI) at the transmitter and perfect CSI at the receiver. For the MIMO ARQ channel model, we follow in the footsteps of El Gamal \textit{et al.} \cite{Label1} and use the incremental-redundancy ARQ transmission scheme. We restrict ourselves to the one-bit feedback (ACK/NACK) MIMO ARQ model. The ARQ feedback channel is assumed to be zero-delay and error-free. The complex baseband model of the received signal at the $\ell$-th round can be mathematically described as
\begin{equation}
\pmb{Y}^c_{\ell}=\sqrt{\rho\over M}\pmb{H}^c_{\ell}\pmb{X}^c_{\ell}+\pmb{W}^c_{\ell},
\label{channel}
\end{equation}
where $\pmb{X}^c_{\ell}\in\mathbb{C}^{M\times T}$ is the transmitted signal matrix, $T$ is the number of channel uses, $\pmb{Y}^c_{\ell}\in\mathbb{C}^{N\times T}$ is the received signal matrix, $\pmb{W}^c_{\ell}\in\mathbb{C}^{N\times T}$ is the noise matrix, $\pmb{H}^c_{\ell}\in\mathbb{C}^{N\times M}$ is the channel matrix, and $\rho$ is the average signal-to-noise ratio (SNR) per receive antenna. The elements of both the noise matrix and the channel fading gain matrix are assumed to be independent identically distributed zero-mean circularly symmetric complex Gaussian random variables with variance $\sigma^2=1$.

In this paper, we assume two different scenarios of channel dynamics. The first model being the \textit{long-term static} channel, where the channel coefficients remain constant during all $L$ rounds, i.e., $\pmb{H}^c_{\ell}=\pmb{H}^c$ for all $1\leq\ell\leq L$. The second scenario is the \textit{short-term static} channel, where the channel remains constant during each round and changes independently at each round. Also, the following short-term average power constraint on the transmitted signal is assumed
\begin{equation}\label{power}
\mathbb{E}\{\|\pmb{X}^c_{\ell}\|_F^2\}\leq MT.
\end{equation}

The equivalent real-valued channel model, after $\ell$ transmission rounds, corresponding to (\ref{channel}) can be written as
\begin{equation}\label{real model}
\pmb{y}_{\ell}=\pmb{H}_{\ell}\pmb{x}+\pmb{w}_{\ell},
\end{equation}
where we define
$\pmb{x}=(\pmb{x}^\mathsf{T}_{1,1},\dots,\pmb{x}^\mathsf{T}_{L,1},\dots,\pmb{x}^\mathsf{T}_{L,T})^\mathsf{T},$
with $\pmb{x}_{\ell,t}^\mathsf{T}=(\Re\{[\pmb{X}^c_{\ell}]_t\}^\mathsf{T},\Im\{[\pmb{X}^c_{\ell}]_t\}^\mathsf{T})^\mathsf{T} $, and
$\pmb{w}=(\pmb{w}^\mathsf{T}_{1,1},\dots, \pmb{w}^\mathsf{T}_{\ell,1},$ $\dots,\pmb{w}^\mathsf{T}_{\ell,T})^\mathsf{T},$
with $\pmb{w}_{\ell,t}^\mathsf{T}=(\Re\{[\pmb{W}^c_{\ell}]_t\}^\mathsf{T},\Im\{[\pmb{W}^c_{\ell}]_t\}^\mathsf{T})^\mathsf{T} $. The vector $\pmb{y}_{\ell}\in\mathbb{R}^{2NT\ell}$ represents the total signal received over all transmitted blocks from 1 to $\ell$. The equivalent real-valued channel matrix, $\pmb{H}_{\ell}$, has dimension $2NT\ell\times 2MTL$ and is formed by taking the first $2NT\ell$ rows of the matrix $\pmb{H}_L$ which is composed by $L$ diagonal blocks, where each diagonal block takes the form
$$\sqrt{\rho\over M}\pmb{I}_{T}\otimes\begin{pmatrix}
    \Re\{\pmb{H}^c_{\ell}\}  & -\Im\{\pmb{H}^c_{\ell}\}\\
   \Im\{\pmb{H}^c_{\ell}\}  &  \Re\{\pmb{H}^c_{\ell}\}
\end{pmatrix}.$$

There are several interesting issues for why we resort to the real-valued channel model. First, the input-output relation describing the channel that is given in (\ref{real model}) allows for the use of \textit{lattice theory}~\cite{Conway} which may simplify our analysis. Second, space-time codes based on lattices have been widely used in MIMO channels due to their low encoding complexity (e.g., nested or Voronoi codes) and the capability of achieving excellent error performance~\cite{Label2}. Another important aspect of lattice space-time (LAST) codes is that they can be decoded by a class of efficient decoders known as  \textit{lattice decoders} (implemented using sphere decoding algorithms).

\subsection{IR-LAST Coding Scheme}
An $m$-dimensional lattice code $\mathcal{C}(\Lambda,\pmb{u}_o,\mathcal{R})$ is the finite subset of the lattice translate $\Lambda+\pmb{u}_0$ inside the shaping region $\mathcal{R}$, i.e., $\mathcal{C}=\{\Lambda+\pmb{u}_0\}\cap\mathcal{R}$, where $\mathcal{R}$ is a bounded measurable region of $\mathbb{R}^m$. It is well-known \cite{Label2} that an $(M\times T)\times L$ space-time coding scheme is a full-dimensional LAST code if its vectorized (real) codebook (corresponding to the channel model (\ref{real model})) is a lattice code with dimension $m=2MTL$.  Let $\Lambda_c=\{\pmb{x}=\pmb{G}\pmb{z}:\pmb{z}\in\mathbb{Z}^m\}$ be a lattice in $\mathbb{R}^m$ where $\pmb{G}$ is an $m\times m$ full-rank lattice generator matrix. Denote $Q_\Lambda(\pmb{x})=\arg\min_{\pmb{\lambda}\in\Lambda}|\pmb{\lambda}-\pmb{x}|$ as the nearest neighbour quantizer associated with a lattice $\Lambda$. The Voronoi cell that corresponds to the lattice point $\pmb{x}\in\Lambda_c$ is the set of points in $\mathbb{R}^m$ closest to $\pmb{x}$, i.e., $\mathcal{V}_{\pmb{x}}(\pmb{G})=\{\pmb{r}\in\mathbb{R}^m:Q_{\Lambda_c}(\pmb{r})=\pmb{x}\}$ and its volume is given by $V_c\stackrel{\Delta}{=}V(\mathcal{V}_{\pmb{x}}(\pmb{G}))=\sqrt{{\rm det}(\pmb{G}^{\mathsf{T}}\pmb{G})}$. The covering radius $r_{\rm cov}(\pmb{G})$ is the radius of the smallest sphere centered at the origin that contains $\mathcal{V}_{\pmb{0}}(\pmb{G})$. The packing radius $r_{\rm pack}(\pmb{G})$ of the lattice $\Lambda_c$ is the radius of the largest sphere centred at the origin that is within $\mathcal{V}_{\pmb{0}}(\pmb{G})$. The effective radius $r_{\rm eff}(\pmb{G})$ is the radius of the sphere with volume equal to $\mathcal{V}_{\pmb{0}}(\pmb{G})$. Let $\Lambda_s$ be a sublattice of $\Lambda_c$ with fundamental Voronoi region (corresponds to lattice point $\pmb{0}$) $\mathcal{V}_s$. An $m$-dimensional \textit{nested} lattice code is defined by $\mathcal{C}=\{\Lambda_c+\pmb{u}_0\}\cap\mathcal{V}_s$.

We say that a LAST code is nested if the underlying lattice code is nested. Here, the information message is effectively encoded into the cosets $\Lambda_s$ in $\Lambda_c$. As defined in \cite{Label2}, we shall call such codes the mod-$\Lambda$ scheme. The proposed mod-$\Lambda$ scheme works as follows. Consider the nested LAST code $\mathcal{C}$ defined by $\Lambda_c$ (the coding lattice) and by its sublattice $\Lambda_s$ (the shaping lattice) in $\mathbb{R}^{m}$. Assume that $\Lambda_s$ has a second-order moment $\sigma^2(\Lambda_s)=1/2$ (so that $\pmb{u}$ uniformly distributed over $\mathcal{V}_s$ satisfies $\mathsf{E}\{|\pmb{u}|^2\}=MTL$). The transmitter selects a codeword $\pmb{c}\in\mathcal{C}$, generates a dither signal\footnote{A dither signal is a random signal that is used to make the MMSE estimation error independent of the transmitted codeword (see \cite{EZ2} for further details).} $\pmb{u}$ with uniform distribution over $\mathcal{V}_s$, and computes $\pmb{x}=[\pmb{c}-\pmb{u}]\; {\rm mod}\; \Lambda_s$.

For the MIMO ARQ channel, we use the mod-$\Lambda$ \textit{incremental redundancy} scheme that was provided in \cite{Label1}.  The signal $\pmb{x}$ is partitioned into $L$ vectors of size $2MT$ each. Those vectors are transmitted, sequentially, in the different ARQ rounds based on the ACK/NACK feedback. Upon completion of the $\ell<L$ transmission, the receiver attempts to decode the message using lattice stack sequential decoder (see Section III) that implements a sort of deadline algorithm. In particular, the received signal, $\pmb{y}_{\ell}$, is multiplied by the forward filter matrix $\pmb{F}_{\ell}$ of the minimum mean-square error decision feedback equalization (MMSE-DFE) corresponding to the truncated matrix $\pmb{H}_{\ell}$, and then the dither signal filtered by the upper triangular feedback filter matrix $\pmb{B}_{\ell}$ of the MMSE-DFE is added to it (the definitions and some useful properties of the MMSE-DFE matrices $\pmb{F}$, $\pmb{B}$ are given in \cite{Label2}). In this case, the received signal can be expressed as
\begin{equation}\label{estimation1}
\pmb{y}'_{\ell}=\pmb{F}_{\ell}\pmb{y}_{\ell}+\pmb{B}_{\ell}\pmb{u}=\pmb{B}_{\ell}\pmb{c}'+\pmb{e}',
\end{equation}
where $\pmb{c}'=\pmb{c}+\pmb{\lambda}$, $\pmb{\lambda}=-Q_{\Lambda_s}(\pmb{c}-\pmb{u})$, and $\pmb{e}'=-[\pmb{B}_{\ell}-\pmb{F}_{\ell}\pmb{H}_{\ell}]\pmb{x}+\pmb{F}_{\ell}\pmb{w}_{\ell}$. One can easily verify (see \cite{Label2}) the relationship between $\pmb{B}_{\ell}$ and the channel matrix $\pmb{H}_{\ell}$ through the following equations:

For the case of long-term static channel, we have
\begin{equation}\label{BL}
\det(\pmb{B}_{\ell}^{\mathsf{T}}\pmb{B}_{\ell})=\left(\det\left(\pmb{I}+{\rho\over M}(\pmb{H}^c)^\mathsf{H}\pmb{H}^c\right)\right)^{2T\ell},
\end{equation}
and for the short-term static channel, we have
\begin{equation}\label{BL2}
\det(\pmb{B}_{\ell}^{\mathsf{T}}\pmb{B}_{\ell})=\prod\limits_{j=1}^{\ell}\det\left(\pmb{I}+{\rho\over M}(\pmb{H}_j^c)^\mathsf{H}\pmb{H}_j^c\right)^{2T}.
\end{equation}

The basic idea in this approach is to use a modified lattice stack sequential decoder for joint error detection and correction. The decoder first check if the channel is in outage. In this case, an error is declared and a NACK is sent back. If not, we use the modified lattice sequential decoder to find a lattice point that satisfies a certain predetermined condition (e.g., a time-out limit). Now if no point is found, an error is declared, and hence, a NACK bit is fed back. If a point is found to satisfy such condition then we proceed to the next step to find the codeword as $\hat{\pmb{c}}=[\pmb{G}\hat{\pmb{z}}]\;{\rm mod}\;\Lambda_s.$ The only exception to this rule is at the $L$-th ARQ round, where the regular lattice stack sequential decoder is used to find the closest lattice point.

\subsection{Rate and Reliability}
Let $\eta$ be defined as the average throughput of the ARQ scheme, expressed in transmitted bits per channel use. Following the definition used in \cite{Label1}, $\eta$ can be expressed as
\begin{equation}\label{eta}
\eta={R_1\over 1+\sum_{\ell=1}^{L-1}p(\ell)},
\end{equation}
where $R_1$ denotes the rate of the first block in bits per channel use, $p(\ell)=\Pr(\overline{\mathcal{A}_1},\dots,\overline{\mathcal{A}_\ell})$ with $\mathcal{A}_{\ell}$ denoting the event that an ACK is fed back at round $\ell$.

Let $E_{\ell}$ denotes the event that the transmitted message is incorrectly decoded by the ARQ decoder, then the probability of error can be upper bounded as
\begin{equation} \label{Pe}
P_e\leq\sum\limits_{\ell=1}^{L-1}\Pr(E_{\ell},\mathcal{A}_{\ell})+\Pr(E_L),
\end{equation}
where $\Pr(E_{\ell},\mathcal{A}_{\ell})$ takes the definition of the probability of \textit{undetected error} at round $\ell\leq L-1$.

Define respectively the \textit{effective} ARQ multiplexing gain and \textit{effective} ARQ diversity gain as
$$r_e=\lim_{\rho\rightarrow \infty}{\eta(\rho)\over\log\rho},\quad d=-\lim_{\rho\rightarrow\infty}{\log P_e(\rho)\over\log\rho}.$$

Achieving the optimal diversity-multiplexing-delay tradeoff in \cite{Label1} was performed using an incremental redundancy ARQ scheme coupled with a list lattice decoder for joint error detection and correction. This decoder (corresponds to~(\ref{estimation1})) finds all lattice points that satisfy (see \cite{Label1})
$$\left\{\pmb{x}\in\mathbb{R}^m: |\pmb{y}'-\pmb{B}_\ell\pmb{x}|^2\leq MTL(1+\gamma\log(\rho))\right\},$$
where $\gamma$ is a constant chosen appropriately to ensure the achievability of the optimal tradeoff. This decoder can be efficiently implemented using sphere decoding algorithms (see for example \cite{Label10}). Sphere decoders, however, are computationally very complex especially for low-to-moderate SNR and large signal dimensions where the output of the list sphere decoder can become extensively large that may result in a waste of time trying to decode the message. Hence, it is of great interest to search for a low complexity joint detector and decoder that can achieve the optimal tradeoff. This fact motivates us to replace the list lattice decoder by a more efficient retransmission strategy using lattice sequential decoders \cite{Label7}, \cite{WD}. The strategy is based on the stack algorithm and is designed to predict the occurrence of an error in advance by monitoring the number of computations performed by the decoder. Before we do that, we would like to introduce next the sequential decoder for lattice codes and some of its parameters that will be used to construct our new MIMO ARQ joint error detection and decoding scheme.

\section{Lattice Sequential Decoder: Performance Bounds and Complexity Distribution}
\subsection{Lattice Stack Algorithm}
The Stack sequential decoder is an efficient tree search algorithm that attempts to find a ``best fit'' with the received noisy signal. As in conventional stack decoder \cite{Label5}, to determine a best fit (path), values are assigned to each node on the tree. This value is called the metric. For the lattice stack sequential decoder, the metric at the $\ell$-th round [corresponds to (\ref{estimation1})] is given by (see~\cite{Label7})
\begin{equation}\label{metric}
\mu(\pmb{z}_1^k,\ell)=bk-|{\pmb{y}''_{\ell}}_1^k-\pmb{R}^{(\ell)}_{kk}\pmb{z}_1^k|^2,\quad \forall 1\leq k \leq m,
\end{equation}
where $\pmb{z}_1^k=[z_k,\cdots,z_2,z_1]^T$ denotes the last $k$ components of the integer vector $\pmb{z}\in\mathbb{Z}^m$, $\pmb{R}^{(\ell)}_{kk}$ is the lower $k\times k$ matrix of $\pmb{R}_{\ell}$ that corresponds to the QR decomposition of the code-channel matrix $\pmb{B}_{\ell}\pmb{G}=\pmb{Q}_{\ell}\pmb{R}_{\ell}$ at the $\ell$-th round, ${\pmb{y}''_{\ell}}_1^k$ is the last $k$ components of the vector ${\pmb{y}''_{\ell}}=\pmb{Q}_{\ell}^{\mathsf{T}}\pmb{y}'_{\ell}$, and $b\geq 0$ is the bias term. The bias parameter is critical for controlling the amount of computations required at the decoding stage.

As the decoder searches nodes, an ordered list of previously examined paths of different lengths is kept in storage. Each stack entry contains a path along with its metric. Each decoding step consists of extending the top (best) path in the stack. The determination of the best and next best nodes is simplified in the closest lattice point search problem by using the Schnnor-Euchner enumeration \cite{Label10} which generates nodes with metrics in ascending order given any node $\pmb{z}_1^k$. The decoding algorithm terminates when the top path in the stack reaches the end of the tree (refer to \cite{Label5} and \cite{Label7} for more details about the algorithm).

The main role of the bias term $b$ used in the algorithm is to control the amount of computations performed by the decoder. In this work, we define the computational complexity of the joint lattice sequential decoder as the total number of nodes visited by the decoder during the search, accumulated over all ARQ rounds, until a new transmission is started. Also, the bias term is responsible for the excellent performance-throughput-complexity tradeoff achieved by such decoding scheme. The role that the bias parameter plays in the new efficient decoding algorithm will be discussed in details in the subsequent sections.

\subsection{Performance Analysis: Lower and Upper Bounds}
In this section, we would like to derive lower and upper bounds on the error performance of the lattice sequential decoder. These bounds work as the primary elements for constructing our new efficient decoder for the MIMO ARQ channel.

Consider the detection at the $\ell$-th ARQ round. For simplicity, we consider here the long-term static channel (similar arguments can be done for the short-term static channel). Assume the received signal is $\pmb{y}_{\ell}=\pmb{B}_{\ell}\pmb{x}+\pmb{e}_\ell$, and denote $E_{ld}(\pmb{B}_\ell)$ and $E_{sd}(\pmb{B}_\ell,b)$ as the events that lattice decoder and lattice sequential decoder make an erroneous detection, respectively, where $b$ is the bias term that was introduced in (\ref{metric}). It is well-known \cite{WD} that lattice decoders outperform lattice sequential decoders for any $b>0$. In fact, the performance of the lattice decoder serves as a lower bound of the lattice stack sequential decoder. Now, lattice decoding disregard the boundaries of the lattice code and find the point of the underlying (infinite) lattice closest to the received point. As such, due to lattice symmetry, one can assume that the all-zero lattice point is transmitted. For a given lattice $\Lambda_c$, we have
 \begin{equation}\label{UB201}
 \begin{split}
P(E_{ld}(\pmb{B}_\ell)|\Lambda_c)={\rm Pr}\left(\bigcup_{\pmb{x}\in\Lambda_c^*}\left\{{2(\pmb{B}_\ell\pmb{x})^\mathsf{T}\pmb{e}_\ell}\geq{|\pmb{B}_\ell\pmb{x}|^2}\right\}\right)\leq P(E_{sd}(\pmb{B}_\ell,b)|\Lambda_c),
\end{split}
\end{equation}
where $\Lambda_c^*=\Lambda_c\backslash\{\pmb{0}\}$.

For the lattice sequential decoder, it seems a bit difficult to obtain an exact expression for its decoding error probability. Instead, we seek to derive an upper bound for the error performance of such a decoder which can be done as follows:
 \begin{equation}\label{UB202}
 \begin{split}
P(E_{sd}(\pmb{B}_\ell,b)|\Lambda_c)&\stackrel{(a)}{\leq}{\rm Pr}\left(\bigcup_{\pmb{z}\in\mathbb{Z}^m\backslash\{\pmb{0}\}}\{\mu(\pmb{z},\ell)>\mu_{\min}(\ell)\}\right)\\
&\stackrel{(b)}{\leq}{\rm Pr}\left(\bigcup_{\pmb{x}\in\Lambda_c^*}\{|\pmb{B}_\ell\pmb{x}|^2-2(\pmb{B}_\ell\pmb{x})^{\mathsf{T}}\pmb{e}_\ell<bm\}\right)\\
&={\rm Pr}\left(\bigcup_{\pmb{x}\in\Lambda_c^*}\left\{{2(\pmb{B}_\ell\pmb{x})^\mathsf{T}\pmb{e}_\ell}>{|\pmb{B}_\ell\pmb{x}|^2}\left(1-{bm\over |\pmb{B}_\ell\pmb{x}|^2}\right)\right\}\right),
\end{split}
\end{equation}
where $(a)$ is due to the fact that in general, $\mu(\pmb{z},\ell)>\mu_{\min}(\ell)$ is just a necessary condition for $\pmb{x}=\pmb{Gz}$ to be decoded by the stack decoder, $\mu_{\min}=\min\{0,b-|{\pmb{e}'_\ell}_1^1|^2,2b-|{\pmb{e}'_\ell}_1^2|^2,\ldots,bm-|{\pmb{e}'_\ell}_1^m|^2\}$ is the minimum metric that corresponds to the transmitted path with $\pmb{e}'_\ell=\pmb{Q}^\mathsf{T}\pmb{e}_\ell$, and $(b)$ follows by noticing that $-(\mu_{\min}+|{\pmb{e}'_\ell}|^2)\leq 0$. Following the footsteps of \cite{WD}, one can show that (\ref{UB202}) can be finally upper bounded as
\begin{equation}\label{UB300}
P(E_{sd}(\pmb{B}_\ell,b)|\Lambda_c)\leq P(E_{ld}(\tilde{\pmb{B}}_\ell)|\Lambda_c),
\end{equation}
where
\begin{equation}\label{newB2}
\tilde{\pmb{B}}_{\ell}=\left(1-{b\over 2^{[R_{\rm mod}(\ell)-R]/M}\phi(\ell)}\right)\pmb{B}_{\ell},
\end{equation}
where $R_{\rm mod}(\ell)$ is the rate at round $\ell$ that can be achieved using MMSE-DFE lattice decoding and according to (\ref{BL}) is given by
$$R_{\rm mod}(\ell)=\log\det(\pmb{B}_{\ell}^\mathsf{T}\pmb{B}_{\ell})^{1/2T}=\log\det\left(\pmb{I}+{\rho\over M}(\pmb{H}^c)^\mathsf{H}\pmb{H}^c\right)^{\ell},$$
$R$ is the transmission rate, and $\phi(\ell)=0.5(2r_{\rm pack}(\pmb{B}_{\ell}\pmb{G})/r_{\rm eff}(\pmb{B}_{\ell}\pmb{G}))^2$ . Interestingly, one may show that $\phi(\ell)$ is lower bounded by a constant independent of SNR and $\ell$ and as a result it has no effect on the performance in the SNR scale of interest.

It is clear from the above analysis that the lattice stack sequential decoder approaches the performance of the lattice decoder as $b\rightarrow 0$, i.e., $E_{sd}(\pmb{B}_\ell,0)=E_{ld}(\pmb{B}_\ell)$. Moreover, the upper bound (\ref{UB300}) corresponds to the probability of decoding error of a received signal $\pmb{y}_\ell=\tilde{\pmb{B}}_\ell\pmb{x}+\pmb{e}_\ell$ decoded using lattice decoding and is valid for all values of $b<2^{[R_{\rm mod}(\ell)-R]/M}\phi(\ell)$, i.e., $E_{sd}(\pmb{B}_\ell,b)=E_{ld}(\tilde{\pmb{B}}_\ell)$. Therefore, for a given lattice $\Lambda_c$, channel matrix $\pmb{H}_\ell$, and a bias term $b>0$, one can bound the error performance of the lattice sequential decoder as
\begin{equation}\label{bounds22}
P(E_{ld}(\pmb{B}_\ell)|\Lambda_c)\leq P(E_{sd}(\pmb{B}_\ell,b)|\Lambda_c)\leq P(E_{ld}(\tilde{\pmb{B}}_\ell)|\Lambda_c).
\end{equation}

By averaging (\ref{bounds22}) over the ensemble of random lattices $\Lambda_c$, one can show that (see \cite{WD}) for a fixed non-random channel matrix $\pmb{H}^c_{\ell}$, the rate
\begin{equation}\label{fano_rate}
R_b(\pmb{H}^c_{\ell},\rho)\triangleq \max\biggl\{R_{\rm mod}(\pmb{H}^c_{\ell},\rho)-2ML\log\left({1+\sqrt{1+8\alpha(\ell)}\over 2}\right),0\biggr\},
 \end{equation}
is achievable by LAST coding and MMSE-DFE lattice Fano/Stack sequential decoding with bias term $b\geq0$, where $\alpha$ is given by
\begin{equation}\label{alpha}
\alpha(\ell)=\left(r_{\rm eff}(\pmb{B}_{\ell}\pmb{G})\over 2r_{\rm pack}(\pmb{B}_{\ell}\pmb{G})\right)^2b.
\end{equation}
The equations (\ref{fano_rate}) and (\ref{alpha}) suggest that as long as the channel is well-conditioned, one may use large of values of bias term which is needed to achieve low decoding complexity (as will be discussed later). On the other hand, if the channel is close to outage, very low values of $b$ must be chosen in order to maintain high achievable rates. For example, if $b=0$, (\ref{fano_rate}) reduces simply to $R_b=R_{\rm mod}$, which corresponds to the rate achievable by the MMSE-DFE lattice decoder.

In general, for the long-term static channel, following the footsteps of \cite{WD}, one can show that if $b$ is allowed to vary with SNR and the channel statistics as
\begin{equation}\label{bias_SNR}
b(\pmb{\lambda},\rho)={1\over 2}{\prod_{i=1}^{M}(1+\rho\lambda_i)^{\ell/ML}\over \eta(\pmb{\lambda},\rho)^{\ell/ML}}\left[1-\left({\eta(\pmb{\lambda},\rho)\over \prod\limits_{i=1}^{M}(1+\rho\lambda_i)}\right)^{\ell/2ML}\right]\left(2r_{\rm pack}(\pmb{B}_{\ell}\pmb{G})\over r_{\rm eff}(\pmb{B}_{\ell}\pmb{G})\right)^2.
\end{equation}
In this case, one can easily show that by substituting $b$ in (\ref{fano_rate}) and (\ref{alpha}), we get
\begin{equation}
R_b(\pmb{\lambda},\rho)=\ell\log\eta(\pmb{\lambda},\rho).
\end{equation}
The term $\eta (\pmb{\lambda},\rho) $ can be chosen freely between 1 and  $\prod_{i=1}^{M}(1+\rho\lambda_i)$ (the maximum achievable rate under lattice decoding). Depending on the value of $\eta(\pmb{\lambda},\rho)$ we obtain different achievable rates and hence different outage performances.

We define the outage event under lattice sequential decoding as $\mathcal{O}_b(\rho,\ell)\triangleq\{\pmb{H}^c_{\ell}: R_b(\pmb{H}^c_{\ell},\rho)<R\}$. Denote $R=r\log\rho$. The probability that the channel is in outage, $P_{\rm out}(\ell,b)={\rm Pr}(\mathcal{O}_b(\rho,\ell))$, can be evaluated as follows:
  \begin{eqnarray}
  P_{\rm out}(\ell,b)={\rm Pr}(\ell\log\eta(\pmb{\lambda},\rho)<R)\limi{=}\rho^{-d_b(r/\ell)}.
  \end{eqnarray}
where $d_{b}(r)$ is the outage SNR exponent that is achieved by the MIMO channel with no ARQ (i.e., with $L=1$). For simplicity, we may express
\begin{equation}\label{eta_zeta}
\eta (\pmb{\lambda},\rho) =\phi\prod_{i=1}^{M}(1+\rho\lambda_i)^{\zeta_i},
\end{equation}
where $0<\phi<1$ is a constant independent of $\rho$, and $\zeta_i,\;\forall 1\leq i\leq M$, are constants that satisfy the following two constraints: $\sum_{i=1}^M \zeta_i\leq M$, and $\zeta_1\geq\zeta_2\geq\cdots\geq \zeta_M\geq 0$. It is not a simple matter to find a closed expression of $d_b(r)$ for all possible values of $\zeta_i$. However, one case of special interest, where a closed form of $d_b(r)$ can be easily derived, corresponds to $\zeta_i\in(0,M)$. In this case, one can show [see \cite{WD}, Theorem 4] that for the MIMO channel with no ARQ, $d_b(r)$ is the piecewise-linear function connecting the points $(r(k),d(k))$, $k=0,1,\cdots,M$, where
\begin{equation}\label{DMT2}
\begin{split}
r(0)&=0, \quad r(k)= \sum\limits_{i=M-k+1}^{M}\zeta_i,\;1\leq k\leq M,\\
d(k)&=(M-k)(N-k),  \quad\quad 0\leq k\leq M.
\end{split}
\end{equation}
The above choice of $\zeta_i$ leads to a bias term (assuming high SNR)
\begin{equation}\label{bias_SNR}
b(\pmb{\lambda},\rho)\approx{1\over 2}\prod_{i=1}^{M}(1+\rho\lambda_i)^{(1-\zeta_i)\ell/ML}
\end{equation}
which may grow exponentially with SNR as $\rho^\epsilon$, $\epsilon>0$, when the channel is not in outage. This may significantly reduce the decoding complexity. However, although the diversity at $r=0$ is not affected by the coefficients $\zeta_i\neq0$ ($d_b(0)=MN$), the more unbalanced the coefficients are, the worse the DMT is.  It is important to note that, other than the uniform assignment $\zeta_1=\zeta_2=\cdots=\zeta_M=1$, the optimal tradeoff
\begin{equation}\label{optimal_DMT}
d^*(r)=(M-r)(N-r), \quad \forall\; r\in[0,\min\{M,N\}),
\end{equation}
cannot be achieved. In this case, we have $b={1\over 2}\phi^{-1/M}[1-\phi^{1/2M}]$, i.e., $b$ is a constant\footnote{For the short-term static channel, although the channel changes from round to round, achieving the optimal tradeoff requires also the use of a constant bias term that is independent of the round $\ell$. Otherwise, for non-fixed $b$, the bias term must be adapted with the channel state at each ARQ round.} independent of $\rho$. In other words, one cannot let $b$ to scale with SNR as $\rho^\epsilon$ (to achieve very low decoding complexity) if the optimal DMT is to be achieved. Therefore, by varying the decoder parameter (bias term), one gets different performance-rate-complexity tradeoffs.

It is a simple matter to extend the above result to the $\ell$-th ARQ round. One can show that, there exists a sequence of full-dimensional LAST codes with block length $T\geq (M+N-1)/\ell$ that achieves the DMT curve $d_b(r_1/\ell)$ which is the piecewise-linear function connecting the points $(r(k),d(k/\ell))$, $k=0,1,\cdots,M$, where
\begin{equation}\label{DMT3}
\begin{split}
r(0)&=0, \quad r(k)= \sum\limits_{i=M-k+1}^{M}\zeta_i,\;1\leq k\leq M,\\
d(k/\ell)&=\left(M-{k\over\ell}\right)\left(N-{k\over\ell}\right),  \quad\quad 0\leq k\leq M.
\end{split}
\end{equation}
In this case, one can show that there exists a lattice code $\Lambda_c$ such that
\begin{equation}\label{E_sd1}
P(E_{sd}(\pmb{B}_\ell,b))\limi{=}\rho^{-d_b(r_1/\ell)}, \quad \forall\;1\leq\ell\leq L.
\end{equation}
under the condition $T\ell\geq N+M-1$.

It must be noted that, except for the case of $\ell=L$ (the final ARQ round), $d_b(r_1/\ell)$ is not the best achievable SNR exponent for the MIMO ARQ channel at any round $\ell<L$. This is because in the above analysis we have ignored the detection capabilities that the MIMO ARQ can have. By carefully designing the lattice sequential decoder to include an efficient error detection mechanism, we will show how the SNR exponent can be raised up to achieve the optimal tradeoff of the channel at any round $\ell$.

From a lattice point of view, the event of error under lattice decoding can be expressed as the event that the received signal is located outside the fundamental Voronoi region of the underlying (infinite) lattice. In this case, assuming $\pmb{0}$ was transmitted, one may express the bounds in (\ref{bounds22})
\begin{equation}\label{bounds2}
P(\pmb{e}\notin \mathcal{V}_{\pmb{0}}(\pmb{B}_\ell\pmb{G})|\Lambda_c)\leq P(E_{sd}(\pmb{B}_\ell,b)|\Lambda_c)\leq P(\pmb{e}\notin \mathcal{V}_{\pmb{0}}(\tilde{\pmb{B}}_\ell\pmb{G})|\Lambda_c).
\end{equation}
where $\tilde{\pmb{B}}_{\ell}$ is as defined in (\ref{newB2}). Interestingly, the upper bound in (\ref{bounds2}) provides us with the fact that as long as $\pmb{e}\in \mathcal{V}_{\pmb{0}}(\tilde{\pmb{B}}_\ell\pmb{G})$, the received signal is correctly decoded using lattice sequential decoding. As will be shown in the sequel, this fact can be used as the basic tool to construct our new efficient joint detection and decoding scheme.

\subsection{Computational Complexity Distribution}
An important parameter of the lattice sequential decoder is the distribution of computation, which characterizes the time needed to decode a message. It is well-known \cite{Label5} that the number of computations required to decode a message using sequential decoders is highly variable and assume very large values during intervals of high channel noise. Moreover, due to the random nature of the channel matrix and the additive noise, the computational complexity of such decoder is considered difficult to analyze in general. However, in \cite{WD} the authors have shown that, for the MIMO channel with no ARQ under lattice sequential decoding with constant bias term, the tail distribution becomes upper bounded by the asymptotic outage probability with SNR exponent that is equivalent to the optimal diversity-multiplexing tradeoff of the channel. This upper bound is shown to be achieved only of the number of computations performed by the decoder exceeds a certain limit when the channel is not in outage. This result can be easily extended to the MIMO ARQ channel as will be explained in the following.

Consider again decoding the received signal at the ARQ round $\ell=1,\cdots, L$. Let $\phi(\pmb{z}_1^k,\ell)$ be the indicator function defined by
\begin{equation}\label{phi}
\phi(\pmb{z}_1^k,\ell)=\begin{cases} 1, &\text{if node $\pmb{z}_1^k$ is extended;}\cr
                                                          0, &\text{otherwise,}\end{cases}
                                                          \end{equation}
and let $\mathcal{N}_j({\ell})$ be a random variable that denotes the total number of visited nodes during the search up to dimension $j$, at round $\ell<L$. In this case, $\mathcal{N}_j{(\ell)}$ can be expressed as
\begin{equation}\label{Nj}
\mathcal{N}_j{(\ell)}=\sum_{k=1}^{j}\sum_{\pmb{z}_1^k\in\mathbb{Z}^k}\phi(\pmb{z}_1^k,\ell).
\end{equation}

For the case of MIMO channel with no ARQ (i.e., $L=1$), following the footsteps of \cite{WD}, one can show that the \textit{asymptotic} computational complexity distribution of the MMSE-DFE lattice sequential decoder (assuming no error detection) is given by (for $T\geq N+M-1$)
\begin{equation}
\Pr(\mathcal{N}_j\geq \Gamma)\leq\Pr(\mathcal{N}_m\geq \Gamma)\limi{\leq} \rho^{-f(r)},
\end{equation}
for all $\Gamma$ that satisfy
\begin{equation}\label{L}
\Gamma\geq m+ \sum\limits_{k=1}^m{{(4\pi)}^{k/2} \over \Gamma(k/2+1)}{[bk+MT(1+\log\rho)]^{k/2}\over \det({\pmb{R}_{kk}^{\mathsf{T}}}\pmb{R}_{kk})^{1/2}},
\end{equation}
where $\pmb{R}_{kk}$ is the lower $k\times k$ part of the upper triangular matrix $\pmb{R}^{(\ell)}$ of the QR decomposition of $\pmb{B}_{\ell}\pmb{G}$, and $f(r)=(M-r)(N-r)$ for $r\in[0,\min\{M,N\})$.  It must be noted that (\ref{L}) is only valid for constant (fixed) bias term $b$, and a lower bound on $\Gamma$ for the general expression of $b$ (see (\ref{bias_SNR}) is known yet. However, as will be discuss in the sequel, even for fixed bias, the above lower bound shows a significant improvement in complexity compared to the more complex optimal sphere decoder.

The above important result indicates that there exists a finite probability that the number of computations performed by the decoder may become excessive even at high SNR, irrespective to the channel being ill or well-conditioned! This probability is usually referred to as the probability of a decoding failure. Such probability limits the performance of the lattice sequential decoder, especially for a one-way communication system. For a two-way communication system, such as in our MIMO ARQ system, the feedback channel can be used to eliminate the decoding failure probability. Therefore, our new decoder must be carefully designed to predict in advance the occurrence of decoding failure to avoid wasting the time trying to decode the message. This would result in a huge saving in decoding complexity. As will be shown in the sequel, the above result can be easily extended to the MIMO ARQ channel.

\section{Time-Out Algorithm}
It is well-known \cite{Label5} that the number of computations required to decode a message using sequential decoders is highly variable and assume very large values during intervals of high channel noise. As such, the decoder is expected to spend longer time attempting to decode the message. For the proposed incremental-redundancy MIMO ARQ system, this condition can be used as an indicator of when the receiver should terminate the search and request the transmitter for additional redundancy bits during any of the $\ell<L$ rounds.

In order to avoid wasting time trying to decode a noisy signal during any of $\ell<L$ ARQ rounds, we implement a \textit{time-out} algorithm in the lattice stack sequential decoder for joint error detection and correction. Such algorithm works as follows: we define a parameter $\Gamma_{\rm out}$ to be the maximum time (number of computations) allowed to decode a message during any of the $\ell<L$ ARQ rounds. If the decoding time exceeds $\Gamma_{\rm out}$, a NACK bit is fed back to the transmitter. The only exception of this rule is when the maximum number of ARQ rounds, $L$, is reached. In this case, the regular lattice sequential decoder (with no time-out limit) is used, where a NACK bit will be interpreted as an error, and the transmission of the next message is started anyway. Next, we define the \textit{retransmission probability} and the \textit{undetected error probability} from a lattice point of view. Those two quantities are responsible for the performance-throughput tradeoff achieved by the MIMO ARQ system. Throughout the work we assume the use of a small (fixed) bias term during all $L$ ARQ rounds. Before continuing our analysis, we would like to introduce some important definitions related to the MIMO ARQ channel that will be used throughout the paper.

Denote $0\leq\lambda^{(j)}_1\leq\cdots\leq\lambda^{(j)}_{\min\{M,N\}}$ the eigenvalues of $(\pmb{H}_j^c)^\mathsf{H}{\pmb{H}_j^c}$, $\forall 1\leq j\leq\ell$. Let us first define the outage event for both long-term and short-term static channels under the lattice sequential decoder with bias term $b$ that is given in (\ref{bias_SNR}) and $\eta(\pmb{\lambda},\rho)$ as defined in (\ref{eta_zeta}), with $\ell$ received blocks as
$$\mathcal{O}_b(\rho,\ell)=\biggl\{\pmb{H}_{j}^c\in\mathcal{C}^{N\times M}\;\forall 1\leq j\leq \ell\;:R_b(\rho)<R_1\biggr\},$$
where
\begin{equation}
R_b(\rho)=\begin{cases} \ell\sum\limits_{i=1}^M\zeta_i\log(1+\rho\lambda_i), & \textit{for long-term static channel;}\cr
\sum\limits_{j=1}^{\ell}\sum\limits_{i=1}^M\zeta_i(1+\rho\lambda_i^{(j)}),&\textit{for short-term static channel},
\end{cases}
\end{equation}
 Denote, $R_1=r_1\log\rho$, and define  $\alpha_i\triangleq-\log\lambda_i/\log\rho$, and $(x)^+=\max\{0,x\}$., then It can be easily verified that at high SNR, the outage event for the long-term static channel model can be expressed as (assuming $N\geq M$)
$$\mathcal{O}_{ls}(\ell)=\left\{\pmb{\alpha}\in\mathbb{R}^M_+: \alpha_1\geq\cdots\geq\alpha_M,\;\sum\limits_{i=1}^M\zeta_i(1-\alpha_i)^+<r_1/\ell\right\},$$
where in such channel we have $\lambda_i^{(j)}=\lambda_i$, $\forall 1\leq j\leq \ell$. For the short-term static channel we have
$$\mathcal{O}_{ss}(\ell)=\left\{(\pmb{\alpha}^{(1)},\cdots,\pmb{\alpha}^{(\ell)})\in\mathbb{R}^{M\ell}_+: \alpha^{(j)}_1\geq\cdots\geq\alpha^{(j)}_M,\forall1\leq j\leq \ell,\;\sum\limits_{j=1}^{\ell}\sum\limits_{i=1}^M\zeta_i\left[1-\alpha^{(j)}_i\right]^+<r_1\right\}.$$
Then, the associated asymptotic outage probability is given by (see \cite{Label1})
\begin{equation}
\outage{\ell}\limi{=}\begin{cases}\Pr(\mathcal{O}_{ls}(\ell))\limi{=} \rho^{-d_{\rm out}(\ell)}, & \textit{for long-term static channel}; \cr
\Pr(\mathcal{O}_{ss}(\ell))\limi{=} \rho^{-\ell d_{\rm out}(\ell)}, & \textit{for short-term static channel},
\end{cases}
\end{equation}
where $d_{\rm out}(\ell)=d_b(r_1/\ell)$ where $d_b(r)$ is as defined in (?).

\subsection{Retransmission Request Probability}
We make use of the lower and the upper bounds of the lattice sequential decoder's error performance in (\ref{bounds2}) to implement an error control mechanism for our IR-LAST MIMO ARQ system. It is clear from (\ref{newB2}) and (\ref{bounds2}) that $\mathcal{V}_{\pmb{0}}(\tilde{\pmb{B}}_\ell\pmb{G})\subseteq\mathcal{V}_{\pmb{0}}(\pmb{B}_\ell\pmb{G})$ for all $b\geq 0$. Therefore, at the decoder side, one can divide each of the Voronoi regions of the channel-code lattice $\Lambda(\pmb{B}_\ell\pmb{G})$, i.e., $\mathcal{V}_{\pmb{u}}(\pmb{B}_\ell\pmb{G})$ (corresponds to a lattice point $\pmb{u}=\pmb{B}_\ell\pmb{Gz}$, $\pmb{z}\in\mathbb{Z}^m$) into two disjoint regions --- $\mathcal{R}_{\pmb{u}}(\tilde{\pmb{B}}_\ell\pmb{G})$ and $\mathcal{V}_{\pmb{u}}(\pmb{B}_\ell\pmb{G})\backslash\mathcal{R}_{\pmb{u}}(\tilde{\pmb{B}}_\ell\pmb{G})$. This is depicted in Fig.~1. For convenience, we define $\mathcal{V}_{\pmb{u}}(\ell)=\mathcal{V}_{\pmb{u}}(\pmb{B}_{\ell}\pmb{G})$ and $\mathcal{R}_{\pmb{u}}(\ell)=\mathcal{R}_{\pmb{u}}(\tilde{\pmb{B}}_{\ell}\pmb{G})$.

Now, denote the region $\mathcal{D}(\ell)$ as
\begin{equation}
\mathcal{D}(\ell)=\mathbb{R}^m\backslash\left\{\bigcup_{\pmb{u}\in\Lambda(\pmb{B}_\ell\pmb{G})}\mathcal{R}_{\pmb{u}}(\ell)\right\}.
\end{equation}
We take advantage of the feedback channel by introducing the erasure option at the decoder such that whenever the received signal $\pmb{y}_{\ell}\in\mathcal{D}(\ell)$, for $\ell< L$, the decoder requests for a repeat transmission or additional bits (as in the case of IR-LAST coding scheme). In this case, at round $\ell< L$, the probability of a retransmission when the channel is \textit{not} in outage is given by
\begin{equation}\label{retrans}
\Pr(\overline{\mathcal{A}}_{\ell},\hbox{no outage})=\Pr(\pmb{y}_{\ell}\in\mathcal{D}(\ell)).
\end{equation}
Unfortunately, determining whether $\pmb{y}_{\ell}\in\mathcal{D}(\ell)$ or not is considered by itself a difficult problem. We try to simplify this problem through the use of lattice sequential decoding by tracking the number of computations performed during the search for the closest lattice point.

Following the definition of the number of computations performed by the decoder provided in (\ref{Nj}), a \textit{retransmission} is requested by the time-out algorithm at round $\ell<L$ if the number of computations exceeds the maximum time allowed before reaching the end of the tree. In other words, a NACK is sent back to the transmitter if, at any $1\leq j< m$, $\mathcal{N}_j{(\ell)}>\Gamma_{\rm out}$. This event could occur during a high channel noise period so that the received signal $\pmb{y}_\ell$ is close to the boundaries of a Voronoi cell of the lattice $\Lambda(\pmb{B}_\ell\pmb{G})$ and as a result, the decoder declares that $\pmb{y}_\ell\in\mathcal{D}(\ell)$. In this case, for the selected value of $b$, one has to carefully choose the time-out parameter $\Gamma_{\rm out}$ so that whenever $\mathcal{N}_j{(\ell)}>\Gamma_{\rm out}$, the decoder decides that $\pmb{y}_{\ell}\in\mathcal{D}(\ell)$ (see Fig.~2.(a)). Selecting an inappropriate value of $\Gamma_{\rm out}$ may result in the loss of the optimal tradeoff. Later, we shall make use of the following result for evaluating (\ref{retrans}), for fixed\footnote{The general case of a variable bias term that is given in (\ref{bias_SNR}) will not be considered here due to the difficulty of obtaining a lower bound on $\Gamma_{\rm out}$ for such bias values.} bias values:
\begin{lem}
For the long-term static ARQ channel, the asymptotic tail distribution of the total computational complexity of the lattice sequential decoder with fixed bias $b>0$, at round $\ell$ given the channel is not in outage, $\Pr(\mathcal{N}_j(\ell)\geq \Gamma_{\rm out})$, can be upper bounded by
\begin{equation}
\Pr(\mathcal{N}_j(\ell)\geq \Gamma_{\rm out})\limi{\leq}\rho^{-d_{\ell}},\quad \forall 1\leq j\leq m,
\end{equation}
under the condition
 $$\Gamma_{\rm out}\geq m+ \sum\limits_{k=1}^m{{(4\pi)}^{k/2} \over \Gamma(k/2+1)}{[bk+MTL(1+\zeta\log\rho)]^{k/2}\over \det(\pmb{R}_{kk}^{(\ell)\mathsf{T}}\pmb{R}^{(\ell)}_{kk})^{1/2}},$$
where $\zeta$ is a constant chosen sufficiently large enough so that $MTL\zeta\geq (M-r_1/\ell)(N-r_1/\ell)$,
\begin{equation}
\begin{split}
d_{\ell}&=\inf_{\pmb{\alpha}\in\overline{\mathcal{O}}_{ls}(\ell)}\biggl\{\sum_{i=1}^M (2i-1+N-M)\alpha_i\\
&\quad\qquad+T\ell\left[\sum_{i=1}^M(1-\alpha_i)^+-r_1/\ell\right]\biggr\},
\end{split}
\end{equation}
and
$$\overline{\mathcal{O}}_{ls}(\ell)=\left\{\pmb{\alpha}\in\mathbb{R}_+^M, \alpha_1\geq \cdots\geq \alpha_M:\sum_{j=1}^M (1-\alpha_i)^+\geq r_1/\ell\right\}.$$
\end{lem}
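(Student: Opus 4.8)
The plan is to establish the complexity tail bound by conditioning on the channel realization and separating the analysis into the outage and non-outage regimes, exactly as the earlier development suggests. First I would fix the channel matrices and express the total number of visited nodes $\mathcal{N}_j(\ell)$ via the indicator decomposition in (\ref{Nj}). The key observation is that a node $\pmb{z}_1^k$ is extended only if its metric $\mu(\pmb{z}_1^k,\ell)$ exceeds the minimum metric $\mu_{\min}(\ell)$ along the correct path; using the metric definition (\ref{metric}) and the lower bound $\mu_{\min}(\ell)\geq -MTL(1+\zeta\log\rho)$ (valid on the non-outage event by the choice of $\zeta$ with $MTL\zeta\geq(M-r_1/\ell)(N-r_1/\ell)$), the extended nodes are forced to lie inside a sequence of hyperspheres of radius governed by $bk+MTL(1+\zeta\log\rho)$. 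Counting lattice points inside these spheres via the volume estimate $V(\mathcal{S}_k(r))=(\pi r^2)^{k/2}/\Gamma(k/2+1)$ and the per-dimension lattice density $\det(\pmb{R}^{(\ell)\mathsf{T}}_{kk}\pmb{R}^{(\ell)}_{kk})^{1/2}$ yields precisely the deterministic bound $\Gamma_{\rm out}\geq m+\sum_{k=1}^m \frac{(4\pi)^{k/2}}{\Gamma(k/2+1)}\frac{[bk+MTL(1+\zeta\log\rho)]^{k/2}}{\det(\pmb{R}^{(\ell)\mathsf{T}}_{kk}\pmb{R}^{(\ell)}_{kk})^{1/2}}$. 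Hence, conditioned on no outage, the event $\{\mathcal{N}_j(\ell)\geq\Gamma_{\rm out}\}$ is essentially empty for $\Gamma_{\rm out}$ above this threshold, so the entire tail probability is dominated by the channel falling into the outage-like region.

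The second step is to bound the contribution from the complementary (atypical) channel realizations. I would write
\begin{equation*}
\Pr(\mathcal{N}_j(\ell)\geq\Gamma_{\rm out})\limi{\leq}\int_{\overline{\mathcal{O}}_{ls}(\ell)} p(\pmb{\alpha})\,d\pmb{\alpha},
\end{equation*}
where $p(\pmb{\alpha})$ is the joint density of the normalized eigenvalue exponents $\alpha_i=-\log\lambda_i/\log\rho$. The standard Zheng--Tse change of variables gives $p(\pmb{\alpha})\limi{=}\rho^{-\sum_{i=1}^M(2i-1+N-M)\alpha_i}$ on the region $\alpha_1\geq\cdots\geq\alpha_M\geq 0$. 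On the non-outage set $\overline{\mathcal{O}}_{ls}(\ell)$ the complexity threshold condition forces an extra penalty term: the number of nodes explored before the correct path is identified scales like $\rho^{T\ell[\sum_{i=1}^M(1-\alpha_i)^+-r_1/\ell]}$, capturing the gap between the instantaneous mutual information and the target rate. Applying Laplace's principle (the exponential-integral asymptotic) collapses the integral to $\rho^{-d_\ell}$ with the exponent obtained by minimizing the combined exponent over $\overline{\mathcal{O}}_{ls}(\ell)$, giving exactly the stated
\begin{equation*}
d_\ell=\inf_{\pmb{\alpha}\in\overline{\mathcal{O}}_{ls}(\ell)}\biggl\{\sum_{i=1}^M(2i-1+N-M)\alpha_i+T\ell\Bigl[\sum_{i=1}^M(1-\alpha_i)^+-r_1/\ell\Bigr]\biggr\}.
\end{equation*}

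The third step is to carefully justify that the deterministic per-realization bound on $\Gamma_{\rm out}$ indeed holds with $\det(\pmb{R}^{(\ell)\mathsf{T}}_{kk}\pmb{R}^{(\ell)}_{kk})^{1/2}$ reflecting the genuine product of the squared diagonal entries of the MMSE-DFE $\pmb{R}$-matrix, and that these quantities exhibit the correct SNR exponents so that the substitution into the eigenvalue integral is legitimate. I would invoke the relationship (\ref{BL}) between $\det(\pmb{B}_\ell^{\mathsf{T}}\pmb{B}_\ell)$ and the channel singular values to convert the determinant factors into functions of $\pmb{\alpha}$, thereby linking the node-counting bound to the outage geometry. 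The main obstacle, I expect, will be making the counting argument fully rigorous at the asymptotic ($\limi{\leq}$) level: one must show that the visited-node count is dominated by the sphere-intersection count uniformly across all dimensions $1\leq j\leq m$ and that the sub-exponential (polynomial in $\log\rho$) factors arising from the $[bk+MTL(1+\zeta\log\rho)]^{k/2}$ terms do not affect the SNR exponent. This is precisely the delicate point where I would lean on the techniques of \cite{WD}, adapting their single-round $L=1$ argument to the truncated channel matrix $\pmb{H}_\ell$ and the round-dependent rate normalization $r_1/\ell$, and verifying that the condition $T\ell\geq N+M-1$ guarantees the existence of a lattice code for which the bound is tight.
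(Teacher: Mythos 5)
Your first step contains the decisive error: you claim the bound $\mu_{\min}(\ell)\geq -MTL(1+\zeta\log\rho)$ is ``valid on the non-outage event by the choice of $\zeta$.'' It is not. Since $\mu_{\min}(\ell)\geq -|\pmb{e}''_{\ell}|^2$, bounding it is a statement about the \emph{additive noise}, not the channel: the event $\{|\pmb{e}''_{\ell}|^2\leq MTL(1+\zeta\log\rho)\}$ fails with probability $\limi{=}\rho^{-MTL\zeta}$ whether or not the channel is in outage. This is exactly why the paper's proof (Appendix I) first splits the conditional tail as in (\ref{UBb2}),
\begin{equation*}
\Pr(\mathcal{N}_m(\ell)\geq \Gamma_{\rm out}\mid\overline{\mathcal{O}}_{ls}(\ell))\leq \Pr(|\pmb{e}'_{\ell}|^2>MTL(1+\gamma))+\Pr(\mathcal{N}_m(\ell)\geq \Gamma_{\rm out},\,|\pmb{e}'_{\ell}|^2\leq MTL(1+\gamma)\mid\overline{\mathcal{O}}_{ls}(\ell)),
\end{equation*}
with $\gamma=\zeta\log\rho$; the hypothesis $MTL\zeta\geq f(r_1/\ell)$ exists solely to make this noise-tail term decay at least as fast as the target exponent. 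Your conditioning error then propagates: given non-outage alone the tail event is \emph{not} ``essentially empty,'' and your display $\Pr(\mathcal{N}_j(\ell)\geq\Gamma_{\rm out})\limi{\leq}\int_{\overline{\mathcal{O}}_{ls}(\ell)}p(\pmb{\alpha})\,d\pmb{\alpha}$ is vacuous as written, since the right-hand side is of order one (the non-outage set carries almost all the probability).

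The second, independent gap is that the quantity which actually produces $d_{\ell}$ --- the decay $\rho^{-T\ell[\sum_{i}(1-\alpha_i)^+-r_1/\ell]}$ of the conditional tail given a non-outage realization $\pmb{\alpha}$ --- is asserted in your second paragraph but never derived (and is even written with the wrong sign: you say the node count ``scales like'' $\rho^{+T\ell[\cdots]}$, which would grow with SNR precisely when the channel is good). In the paper this exponent comes from a genuinely probabilistic chain that your sphere-counting/determinant bookkeeping cannot replace: Markov's inequality applied to $\tilde{\mathcal{N}}_m(\ell)$ (the nodes off the transmitted path), the product bound $\mathbb{E}_{\pmb{e}'}\{\tilde{\mathcal{N}}_m(\ell)\}\leq\sum_{k}S'_k(\ell)\sum_{\pmb{x}\in\Lambda_c^*}\Pr(|\pmb{B}_{\ell}\pmb{x}|^2-2(\pmb{B}_{\ell}\pmb{x})^{\mathsf{T}}\pmb{e}'_{\ell}<bm)$, a reduction of the non-Gaussian $\pmb{e}'_{\ell}$ to a Gaussian vector (the $\beta_m$ factor from \cite{Label6}), a Chernoff bound, and --- crucially --- averaging over the Loeliger random-lattice ensemble, which is the only place the rate enters, through $1/V_c\limi{=}\rho^{r_1T}/V(\mathcal{R})$ in (\ref{M}); relation (\ref{BL}) alone ties the determinants to $\pmb{\alpha}$ but not to $r_1$. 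In that argument the threshold condition $\Gamma_{\rm out}\geq m+\sum_k S'_k(\ell)$ serves to cancel the Markov prefactor $\sum_k S'_k(\ell)/(\Gamma_{\rm out}-m)$, not to empty the event. A repaired version of your ``deterministic'' idea does exist --- given bounded noise the count is deterministically below $m+\sum_kS'_k(\ell)\leq\Gamma_{\rm out}$, so the whole tail is bounded by the noise tail $\rho^{-MTL\zeta}\limi{\leq}\rho^{-f(r_1/\ell)}\leq\rho^{-d_{\ell}}$ --- but that route requires conditioning on the noise norm, which is precisely the step your write-up replaces with the outage event, and it is not the argument the paper gives.
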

\begin{proof}
see Appendix I.
\end{proof}
It must be noted that the above result can be easily extended for the short-term static channel. In this case, one can show that $\Pr(\mathcal{N}_j(\ell)\geq \Gamma_{\rm out})\limi{\leq}\rho^{-\ell d_{\ell}},\; \forall 1\leq j\leq m$. Next, we derive an upper bound for the undetected error probability.

\subsection{Undetected Error Probability}
Another important parameter that characterizes the performance of the MIMO ARQ system, is the undetected error probability which was defined in (\ref{Pe}). In the IR-LAST MIMO ARQ system, an ACK bit is sent back to the transmitter if the decoder correctly decode the received signal subject to the condition that $\mathcal{N}_m(\ell)<\Gamma_{\rm out}$. This corresponds to the event $\pmb{y}_\ell\in\mathcal{R}_{\pmb{0}}(\ell)$, assuming $\pmb{0}$ was transmitted (see Fig.~2.(b)).  Also, an ACK is sent back to the transmitter, if decoding fails but it is not detected. Such an event occurs when the total number of computations $\mathcal{N}_m(\ell)<\Gamma_{\rm out}$, but the decoded lattice point is not $\pmb{0}$. This happens when the received signal $\pmb{y}_\ell\in\mathcal{R}_{\pmb{u}}(\ell)$ for any $\pmb{u}\neq\pmb{0}$ (see Fig.~2.(c)). Using the above argument, the undetected error probability can be expressed as (assuming $\pmb{0}$ was transmitted)
\begin{equation}\label{undetected}
\Pr(E_{sd}(\pmb{B}_{\ell},b),\mathcal{A}_{\ell})=\Pr\left(\bigcup_{\pmb{u}\in\Lambda^*(\pmb{B}_\ell\pmb{G})} \{\pmb{e}'_\ell\in\mathcal{R}_{\pmb{u}}(\ell)\}\right).
\end{equation}

Our goal now is to upper bound (\ref{undetected}). In this work, we will resort to a geometrical approach to obtain an upper bound on the undetected error probability. Before doing so, we express the event of sending an ACK, i.e., $\mathcal{A}_{\ell}$, in terms of the number of computations performed by the lattice sequential decoder as $\mathcal{A}_{\ell}=\{\mathcal{N}_m{(\ell)}< \Gamma_{\rm out}\}$. Consider now the following theorem:

\begin{thm}
For any lattice code $\Lambda_c$, the undetected error probability of the quasi-static $M\times N$ MIMO ARQ system with maximum rounds $L$, and codeword length $T$, under the time-out MMSE-DFE lattice sequential decoding scheme with parameters $b$ and $\Gamma_{\rm out}$, is bounded above as

In the case of long-term static channel
\begin{equation}
\Pr(E_{sd}(\pmb{B}_{\ell},b),\mathcal{A}_{\ell})\limi{\leq}\rho^{-f(r_1/L)},
\end{equation}
which is achieved with code block length $T\geq \lceil{(M+N-1)/L}\rceil$, where $f(r)=(M-r)(N-r)$, and any $b$ that satisfies (\ref{bias_SNR}).

In the case of short-term static channel
\begin{equation}
\Pr(E_{sd}(\pmb{B}_{\ell},b),\mathcal{A}_{\ell})\limi{\leq}\rho^{-Lf(r_1/L)},
\end{equation}
which is achieved with code block length $T\geq M+N-1$.
\end{thm}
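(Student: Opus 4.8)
The plan is to bound the undetected error probability by working entirely in the lattice-geometric picture established in (\ref{bounds2}) and (\ref{undetected}). Starting from the expression in (\ref{undetected}), I would first observe that the undetected error event requires both that $\mathcal{N}_m(\ell)<\Gamma_{\rm out}$ (an ACK is sent) and that the decoded point is a nonzero lattice point, i.e.\ $\pmb{e}'_\ell\in\mathcal{R}_{\pmb{u}}(\ell)$ for some $\pmb{u}\neq\pmb{0}$. Since the regions $\mathcal{R}_{\pmb{u}}(\ell)$ are contained in the Voronoi cells $\mathcal{V}_{\pmb{u}}(\tilde{\pmb{B}}_\ell\pmb{G})$ of the \emph{shrunken} lattice $\tilde{\pmb{B}}_\ell\pmb{G}$, the union over $\pmb{u}\in\Lambda^*(\pmb{B}_\ell\pmb{G})$ of the events $\{\pmb{e}'_\ell\in\mathcal{R}_{\pmb{u}}(\ell)\}$ is exactly the complement of $\mathcal{V}_{\pmb{0}}(\tilde{\pmb{B}}_\ell\pmb{G})$. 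Hence the undetected error probability is upper bounded by $P(\pmb{e}\notin\mathcal{V}_{\pmb{0}}(\tilde{\pmb{B}}_\ell\pmb{G})|\Lambda_c)=P(E_{ld}(\tilde{\pmb{B}}_\ell)|\Lambda_c)$, which by (\ref{UB300}) and (\ref{E_sd1}) is just the lattice-sequential error probability at round $\ell$.

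The key reduction is therefore to relate the error exponent at an intermediate round $\ell<L$ to the exponent at the \emph{final} round $L$, which is what the target exponents $f(r_1/L)$ (long-term) and $Lf(r_1/L)$ (short-term) encode. The crucial structural fact is that the incremental-redundancy scheme accumulates observations: the matrix $\pmb{B}_\ell$ at round $\ell$ is a sub-block (built from the first $2NT\ell$ rows of $\pmb{H}_L$), so the effective received signal at round $\ell$ is a \emph{degraded} version of what is available at round $L$. An undetected error occurring at round $\ell$ (with an ACK sent) therefore implies a decoding error persists even though fewer observations were used; by the monotonicity of the Voronoi-region inclusion and the determinant identities (\ref{BL}), (\ref{BL2}), the worst-case exponent over all rounds is governed by the round that yields the smallest diversity. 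I would then invoke the averaged-code result leading to (\ref{fano_rate}) and (\ref{E_sd1}) to conclude that, over the ensemble of random lattices, there exists a $\Lambda_c$ achieving $P(E_{sd}(\pmb{B}_\ell,b))\limi{\leq}\rho^{-d_b(r_1/\ell)}$ at each round, and the detection mechanism (erasure of $\mathcal{D}(\ell)$) removes exactly the dominant outage-type error events, raising the exponent to $f(r_1/L)$.

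The block-length conditions fall out of the achievability requirement $T\ell\geq N+M-1$ stated below (\ref{E_sd1}): to guarantee the bound holds \emph{at every} round $\ell\leq L$ including the first, the long-term case needs $T\geq\lceil(M+N-1)/L\rceil$ (since redundancy accumulates and the weakest constraint is at $\ell=L$), whereas the short-term case, where each round supplies an independent channel realization, requires $T\geq M+N-1$ per round and yields the product-form exponent $Lf(r_1/L)$ from the independence in (\ref{BL2}) and the factorized outage set $\mathcal{O}_{ss}(\ell)$. For the short-term channel I would combine the independence across rounds with the single-round exponent $f(r_1/L)$, so that the joint outage exponent multiplies by $L$.

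The main obstacle I anticipate is making rigorous the claim that the error-detection mechanism (requesting retransmission whenever $\pmb{y}_\ell\in\mathcal{D}(\ell)$, implemented via the time-out threshold $\Gamma_{\rm out}$) genuinely removes the low-diversity error events without introducing new undetected errors. Concretely, one must show that the region $\mathcal{D}(\ell)$ captures all the received points that would otherwise be decoded to a wrong codeword with non-negligible probability, so that the \emph{remaining} undetected-error region $\bigcup_{\pmb{u}\neq\pmb{0}}\mathcal{R}_{\pmb{u}}(\ell)$ contributes only the high-diversity $\rho^{-f(r_1/L)}$ term. This requires the pairwise-error/union-bound estimate over the shrunken lattice $\tilde{\pmb{B}}_\ell\pmb{G}$ to be carried out carefully, controlling the Gaussian tail of $\pmb{e}'_\ell$ against the minimum distance of the code-channel lattice, and verifying (via the determinant relations and the $\alpha_i$-parametrization of the eigenvalues) that the resulting exponent indeed collapses to the claimed $f(r_1/L)$ rather than the weaker intermediate-round exponent $d_b(r_1/\ell)$. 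I expect this geometric-to-exponent translation, together with the averaging over the random lattice ensemble, to be where the bulk of the technical work lies.
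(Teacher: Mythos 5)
There is a genuine gap, and it sits exactly at the step your proposal leans on. Your opening reduction bounds the undetected error probability by $P(\pmb{e}'_\ell\notin\mathcal{V}_{\pmb{0}}(\tilde{\pmb{B}}_\ell\pmb{G})\,|\,\Lambda_c)=P(E_{ld}(\tilde{\pmb{B}}_\ell)|\Lambda_c)$, i.e.\ by the \emph{unconditional} round-$\ell$ error probability of the shrunken lattice. (Incidentally, the union $\bigcup_{\pmb{u}\neq\pmb{0}}\mathcal{R}_{\pmb{u}}(\ell)$ is a proper subset of the complement of $\mathcal{V}_{\pmb{0}}(\tilde{\pmb{B}}_\ell\pmb{G})$, not ``exactly'' equal to it --- the dead zone $\mathcal{D}(\ell)$ also lies in that complement --- though the inequality itself is valid.) The problem is that this bound throws away precisely the feature the theorem is about: by (\ref{E_sd1}) it can only yield the exponent $d_b(r_1/\ell)$, and since $f$ is decreasing this is \emph{weaker} than the claimed $f(r_1/L)$ for every $\ell<L$. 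You correctly sense this and then assert that the erasure of $\mathcal{D}(\ell)$ ``removes exactly the dominant outage-type error events, raising the exponent to $f(r_1/L)$,'' but no mechanism is given for this --- that assertion \emph{is} the theorem, so the argument is circular at its critical point.

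The paper's proof proceeds by a different route that never passes through the round-$\ell$ error exponent at all. First, an undetected error requires an ACK, and the decoder NACKs whenever the channel is in outage, so one may condition on no outage. Second, $b$ is chosen (consistently with (\ref{bias_SNR})) so that $r_{\rm eff}(\tilde{\pmb{B}}_{\ell}\pmb{G})\leq r_{\rm pack}(\pmb{B}_{\ell}\pmb{G})$; then landing in any $\mathcal{R}_{\pmb{u}}(\ell)$ with $\pmb{u}\neq\pmb{0}$ forces $|\pmb{e}'_\ell|^2\geq r_{\rm eff}^2(\tilde{\pmb{B}}_{\ell}\pmb{G})$, which is inequality (\ref{UDE}). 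Third --- and this is the ingredient your proposal is missing --- off outage, the determinant bound $\det(\tilde{\pmb{B}}_{\ell}^{\mathsf{T}}\tilde{\pmb{B}}_{\ell})\geq\eta^{2T}$ combined with the codebook-size relation (\ref{M}) gives $r_{\rm eff}^2(\tilde{\pmb{B}}_{\ell}\pmb{G})\geq MTL\,\rho^{\nu}$ with $\nu>0$, hence $r_{\rm eff}^2\;\dot{\geq}\;MTL(1+\gamma\log\rho)$ for \emph{any} constant $\gamma>0$. The chi-square tail of the effective noise then gives $\Pr(E_{sd}(\pmb{B}_{\ell},b),\mathcal{A}_{\ell})\;\dot{\leq}\;\rho^{-2MTL\gamma}$, so the undetected-error exponent is a free parameter: one simply picks $\gamma$ with $MTL\gamma\geq f(r_1/L)$ (respectively $Lf(r_1/L)$ for the short-term case). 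In other words, $f(r_1/L)$ is hit as a chosen target of a noise-tail bound whose threshold grows like $\log\rho$ when the channel is good, not as the outcome of any outage or pairwise-error computation over the shrunken lattice; your proposed ``geometric-to-exponent translation'' via union bounds over $\tilde{\pmb{B}}_\ell\pmb{G}$ cannot reach it.
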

\begin{proof}
It seems a bit difficult to obtain an upper bound directly for the undetected error probability using (\ref{undetected}). Therefore, we resort to a geometrical approach (see Fig.~3) to further upper bound (\ref{undetected}) by selecting $b$ such that the bound $r_{\rm eff}(\tilde{\pmb{B}}_{\ell}\pmb{G})\leq r_{\rm pack}(\pmb{B}_{\ell}\pmb{G})$ is maintained for all $\ell<L$. The effective radius $r_{\rm eff}(\tilde{\pmb{B}}_{\ell}\pmb{G})$ of the lattice generated by $\tilde{\pmb{B}}_{\ell}\pmb{G}$ is given by
\begin{equation}
r_{\rm eff}(\tilde{\pmb{B}}_{\ell}\pmb{G})=\left[{V_c\det({\tilde{\pmb{B}}_{\ell}}^\mathsf{T}\tilde{\pmb{B}}_{\ell})^{1/2}\over V(\mathcal{S}_m(1))}\right]^{1/m}.
                     \end{equation}
In this case, we can upper bound the undetected error probability as
\begin{equation}\label{UDE}
\Pr(E_{sd}(\pmb{B}_{\ell},b),\mathcal{A}_{\ell})=\Pr\left(\bigcup_{\pmb{u}\in\Lambda^*(\pmb{BG})} \{\pmb{e}'_{\ell}\in\mathcal{R}_{\pmb{u}}(\ell)\}\right)\leq\Pr(|\pmb{e}'_{\ell}|^2\geq r_{\rm eff}^2(\tilde{\pmb{B}}_{\ell}\pmb{G})).
\end{equation}\label{eps}
It is not yet clear how the RHS of (\ref{UDE}) can be evaluated. To overcome this problem, we can find a lower bound on $r_{\rm eff}^2(\tilde{\pmb{B}}_{\ell}\pmb{G})$ at high SNR as follows:

Asymptotically, one can express bias term that is defined by (\ref{bias_SNR}) as
\begin{equation}\label{p2}
b\limi{=}{\rho^{{\ell\over ML}\sum_{i=1}^M(1-\alpha_i)^+}\over \eta^{\ell/ML}}\left[1-\left({\eta\over \rho^{\sum_{i=1}^M(1-\alpha_i)^+}}\right)^{\ell/2ML}\right].
\end{equation}
Substituting (\ref{p2}) in (\ref{newB2}), when the channel is not in outage, one can upper bound $\det(\tilde{\pmb{B}}_{\ell}^\mathsf{T}\tilde{\pmb{B}}_{\ell}))\geq \eta^{2T}$, where $\eta\limi{=}\rho^{\sum_{i=1}^M\zeta_i(1-\alpha_i)}$. In this case, we have that
\begin{eqnarray} \label{reff}
r_{\rm eff}^2(\tilde{\pmb{B}}_{\ell}\pmb{G})&\geq&\left[{V_c\over V(\mathcal{R})}{V(\mathcal{R})\over V(\mathcal{S}_m(1))}\eta^{T}\right]^{2/m}\cr
&\stackrel{(a)}{\geq}&MTL\left[\rho^{-r_1T }\rho^{T\sum_{i=1}^M\zeta_i(1-\alpha_i)}\right]^{2/m}\cr
&\limi{\geq}&MTL \rho^{\nu}\limi{\geq}MTL(1+\gamma\log\rho),
\end{eqnarray}
where $(a)$ follows from the fact that (see \cite{Label9}) there exists a shifted lattice code $\Lambda_c+\pmb{u}_0^*$ with number of codewords inside the shaping region,
\begin{equation}
|\mathcal{C}(\Lambda_c,\pmb{u}_0^*,\mathcal{R})|=2^{R_1T}=\rho^{r_1T}\geq{V(\mathcal{R})\over V_c}.
\label{M}
\end{equation}
Also, $\nu={\ell\over ML}[\sum_{j=1}^{M}\zeta_i(1-\alpha_j)^{+}-{r_1/ \ell}]>0$ when the channel is not in outage, and the last inequality follows from the fact that $\lim_{\rho\rightarrow\infty}(1+\gamma\log\rho)/\rho^{\nu}=0$ for any $\nu,\gamma>0$.
 Therefore,
\begin{eqnarray}\label{UDEP}
\Pr(E_{sd}(\pmb{B}_{\ell},b),\mathcal{A}_{\ell})\limi{\leq}\Pr(|\pmb{e}'_\ell|^2\geq MLT(1+\gamma\log\rho))\limi{\leq}\rho^{-2MTL\gamma}.
\end{eqnarray}
By choosing a large enough value of $\gamma$ such that $MTL\gamma\geq f(r_1/L)$, we obtain
\begin{equation}
\Pr(E_{sd}(\pmb{B}_{\ell},b),\mathcal{A}_{\ell})\limi{\leq}\rho^{-f(r_1/L)},
\end{equation}
under the condition that $LT\geq M+N-1$.

The above analysis also applies to the short-term static channel, and one can show that under the condition $T\geq N+M-1$, we have
\begin{equation}
\Pr(E_{sd}(\pmb{B}_{\ell},b),\mathcal{A}_{\ell})\limi{\leq}\rho^{-Lf(r_1/L)}.
\end{equation}
\end{proof}

As will be shown in the sequel, (\ref{retrans}) and (\ref{undetected}) play an important role in determining the achievable diversity-multiplexing-delay tradeoff of the MIMO ARQ channel. Reducing the number of retransmissions comes at the expense of increasing the undetected error probability, which is undesirable. It is clear that both probabilities are closely related and hence increasing or decreasing one of them may lead to a loss in the optimal tradeoff of the channel.

\subsection{Achieving the Optimal Tradeoff: Bias Term vs. $\Gamma_{\rm out}$}
Our goal here is to prove the optimality of the time-out lattice sequential decoder in terms of the achievable diversity-multiplexing-delay tradeoff. It is well-known \cite{Label7} that the bias term $b$ controls the amount of the computations performed by the lattice sequential decoder during the search and responsible for the excellent performance-complexity tradeoff achieved by such a decoder. Choosing a very large value of $b$ although greatly reduces decoding complexity, it may lead to a loss in the optimal tradeoff of the channel. Therefore, it is expected that the time-out parameter $\Gamma_{\rm out}$ will be a function of the bias term $b$ chosen in the algorithm. One may have already noticed that in Lemma~1. It turns out that an optimal value of $\Gamma_{\rm out}$, denoted by $\Gamma_{\rm out}^*$, exists so that the optimal tradeoff is achieved with a fairly low decoding complexity (i.e., average number of computations) compared to the joint list lattice decoder. The achievability of the optimal tradeoff, for any \textbf{\textit{fixed}} bias term, under time-out lattice sequential decoding is summarized in the following theorem:
\begin{thm}
Consider a MIMO ARQ channel under short-power constraint given in (\ref{power}), with $M$ transmit, $N$ receive antennas, a maximum number of ARQ rounds $L$, an effective multiplexing gain $0\leq r_e<\min\{M,N\}$. Then, the IR-LAST coding scheme under \textit{time-out} lattice stack sequential decoding with parameter $\Gamma_{\rm out}$ and fixed $b>0$, achieves the optimal tradeoff:

In the case of long-term static channel
$$d_{ls}^{*}(r_e,L)=\begin{cases}f\left(\displaystyle{r_e\over L}\right), & 0\leq r_e<\min\{M,N\};\\
                                                   0, & r_e\geq\min\{M,N\},\end{cases}$$
which is achieved with code block length $T\geq \lceil{(M+N-1)/L}\rceil$, where $f(r)=(M-r)(N-r)$.

In the case of short-term static channel
$$d_{ss}^{*}(r_e,L)=\begin{cases}Lf\left(\displaystyle{r_e\over L}\right), & 0\leq r_e<\min\{M,N\};\\
                                                   0, & r_e\geq\min\{M,N\},\end{cases}$$
which is achieved with code block length $T\geq M+N-1$. The optimal tradeoff is achieved subject to the condition
\begin{equation}\label{gamma}
\Gamma_{\rm out}\geq m+ \sum\limits_{k=1}^m{{(4\pi)}^{k/2} \over \Gamma(k/2+1)}{[bk+MTL(1+\zeta\log\rho)]^{k/2}\over \det(\pmb{R}_{kk}^{(\ell)\mathsf{T}}\pmb{R}^{(\ell)}_{kk})^{1/2}}.
\end{equation}
\end{thm}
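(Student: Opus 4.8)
The plan is to prove achievability by analyzing the effective multiplexing gain $r_e$ and the effective diversity gain $d$ separately, showing that $r_e = r_1$ and $d = f(r_e/L)$ for the long-term channel (respectively $d = Lf(r_e/L)$ for the short-term channel). Since the right-hand sides coincide with the optimal diversity-multiplexing-delay tradeoff established in \cite{Label1}, and that tradeoff is a converse (upper) bound on any scheme, matching it from below yields the theorem. Throughout I take the uniform assignment $\zeta_1 = \cdots = \zeta_M = 1$, which is precisely the case in which the bias term $b$ of (\ref{bias_SNR}) reduces to a constant independent of $\rho$, so that $d_b(r_1/\ell) = f(r_1/\ell)$ and the \emph{fixed} $b > 0$ hypothesis is met.

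First I would treat the multiplexing gain. From the throughput expression (\ref{eta}) we have $\eta/\log\rho = r_1/(1 + \sum_{\ell=1}^{L-1} p(\ell))$ with $p(\ell) = \Pr(\overline{\mathcal{A}}_1, \ldots, \overline{\mathcal{A}}_\ell)$. The key step is to show each $p(\ell) \to 0$ at high SNR. Conditioning on the outage event after $\ell$ rounds, a persistent NACK arises either from a genuine outage (probability $\dot{=} \rho^{-d_{\rm out}(\ell)}$ for the long-term channel) or from a time-out false alarm while the channel is not in outage; by Lemma 1, invoked under precisely the hypothesis (\ref{gamma}) on $\Gamma_{\rm out}$, the latter is $\dot{\leq} \rho^{-d_\ell}$. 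Both exponents are strictly positive for $r_1/\ell < \min\{M,N\}$, so $p(\ell) \to 0$ and therefore $\eta/\log\rho \to r_1$, i.e. $r_e = r_1$.

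Next I would bound the error probability through the union bound (\ref{Pe}), $P_e \leq \sum_{\ell=1}^{L-1} \Pr(E_\ell, \mathcal{A}_\ell) + \Pr(E_L)$. Each undetected-error term $\Pr(E_\ell, \mathcal{A}_\ell)$ is furnished by Theorem 1, which gives $\dot{\leq} \rho^{-f(r_1/L)}$ (long-term) and $\dot{\leq} \rho^{-Lf(r_1/L)}$ (short-term), with the same exponent for every $\ell < L$. For the final round no time-out is applied, so $\Pr(E_L) = P(E_{sd}(\pmb{B}_L, b))$, which by (\ref{E_sd1}) at $\ell = L$ equals $\rho^{-f(r_1/L)}$ (respectively $\rho^{-Lf(r_1/L)}$) under the stated block-length conditions. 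Since a finite sum inherits the dominant SNR exponent, $P_e \dot{\leq} \rho^{-f(r_1/L)}$, whence $d \geq f(r_1/L) = f(r_e/L)$; the short-term case is identical up to the factor $L$.

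The main obstacle is to confirm that the single threshold $\Gamma_{\rm out}$ of (\ref{gamma}) simultaneously serves its two opposing roles. On one hand it must be large enough that, when the channel is not in outage, the decoder reaches the end of the tree before timing out with probability $\dot{\leq} \rho^{-d_\ell}$ (Lemma 1), so that false-alarm retransmissions are too rare to reduce $r_e$ below $r_1$; this is why (\ref{gamma}) is stated as a lower bound built from the complexity estimate (\ref{L}) with the inflated parameter $\zeta$ chosen so that $MTL\zeta$ dominates $f(r_1/\ell)$. On the other hand, the induced erasure region must still be large enough that the geometric inclusion $r_{\rm eff}(\tilde{\pmb{B}}_\ell \pmb{G}) \leq r_{\rm pack}(\pmb{B}_\ell \pmb{G})$ underlying Theorem 1 holds, keeping the undetected error at the target exponent $f(r_1/L)$. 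Reconciling these two requirements --- rare retransmission versus reliable error detection --- through the common $(b, \Gamma_{\rm out})$ pair is the crux; once it is verified, the two-part exponent analysis above delivers the optimal $d_{ls}^*$ and $d_{ss}^*$.
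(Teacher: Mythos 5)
Your proposal is correct and follows essentially the same route as the paper's own proof in Appendix III: the same decomposition of $P_e$ via (\ref{Pe}) with Theorem 1 supplying the undetected-error exponents for $\ell<L$, and Lemma 1 together with the outage exponent showing $p(\ell)\rightarrow 0$ so that $r_e\limi{=}r_1$, matched against the outage lower bound. The only (cosmetic) difference is that you invoke (\ref{E_sd1}) for the final-round term $\Pr(E_L)$, whereas the paper re-derives that bound explicitly for arbitrary fixed $b>0$ (union bound, Gaussian-noise replacement, Chernoff bound, and Loeliger ensemble averaging with $\tilde{\pmb{B}}_L=(1-\epsilon)\pmb{B}_L$), which also covers fixed biases beyond the particular constant value induced by the uniform assignment $\zeta_1=\cdots=\zeta_M=1$.
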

\begin{proof}
see Appendix III.
\end{proof}

It must be noted that the above theorem is \textit{only} valid for non-zero, but fixed values of $b$. Although one can fully characterize the achievable tradeoff in terms of the bias term\footnote{For example, if we let $b$ to vary with SNR and channel statistics as given in (\ref{bias_SNR}) one can achieve an asymptotic SNR exponent $d^*_{ls}(r_e,L)=d_b(r_e/L)$ and $d_{ss}^*(r_e,L)=Ld_b(r_e/L)$ for the long-term and short term static channels, respectively, where $d_b(r)$ is as described in section III.} (see Section III), for non-fixed bias, it is not yet clear how the time-out parameter $\Gamma_{\rm out}$ changes with the bias term for variable $b$. Therefore, in what follows, and for the purpose of completing the analysis, we will only consider the case of fixed bias term.

Now, since $\Gamma_{\rm out}$ depends on the channel statistics (i.e., it is random), it would be desirable to determine its average value (averaged over channel statistics when it is not in outage). This may shed the light on determining the optimal value of $\Gamma_{\rm out}$ that can be used to achieve the optimal tradeoff of the channel. This can be done as follows:

Consider the long-term static channel. Therefore, the optimal average value of $\Gamma_{\rm out}$, say $\overline{\Gamma^*}_{\rm out}$, may be asymptotically lower bounded by
\begin{equation}\label{Gamma_av}
\overline{\Gamma^*}_{\rm out}\limi{=}m+\sum\limits_{k=1}^m(\log\rho)^{k/2}\mathsf{E}_{\pmb{\alpha}\notin\mathcal{O}_{ls}(\ell)}
\left\{\det(\pmb{R}_{kk}^{(\ell)\mathsf{T}}\pmb{R}^{(\ell)}_{kk})^{-1/2}\right\},
\end{equation}

In this paper we focus our analysis on nested LAST codes, specifically LAST codes that are generated using construction A that is described below (see \cite{Label9}).

We consider the Loeliger ensemble of mod-$p$ lattices, where $p$ is a prime. First, we generate the set of all lattices given by
$$\Lambda_p=\kappa (\mathsf{C}+p\mathbb{Z}^{2MTL})$$
where $p\rightarrow \infty$, $\kappa\rightarrow 0$ is a scaling coefficient chosen such that the fundamental volume $V_f=\kappa^{2MTL}p^{2MTL-1}=1$, $\mathbb{Z}_p$ denotes the field of mod-$p$ integers, and $\mathsf{C}\subset\mathbb{Z}_p^{2MTL}$ is a linear code over $\mathbb{Z}_p$ with generator matrix in systematic form $[\pmb{I}\;\pmb{P}^\mathsf{T}]^\mathsf{T}$. We use a pair of self-similar lattices for nesting. We take the shaping lattice to be $\Lambda_s=\zeta\Lambda_p$, where $\zeta$ is chosen such that the covering radius is $1/2$ in order to satisfy the input power constraint. Finally, the coding lattice is obtained as $\Lambda_c=\rho^{-r/2ML}\Lambda_s$. Interestingly, one can construct a generator matrix of $\Lambda_p$ as (see \cite{Conway})
\begin{equation}
\pmb{G}_p=\kappa\begin{pmatrix}
\pmb{I} & \pmb{0}\\
\pmb{P} & p\pmb{I}
 \end{pmatrix},
\end{equation}
which has a lower triangular form. In this case, one can express the generator matrix of $\Lambda_c$ as $\pmb{G}=\rho^{-r/2ML}\pmb{G}'$, where $\pmb{G}'=\zeta\pmb{G}_p$. Thanks to the lower triangular format of $\pmb{G}$. If $\pmb{M}$ is an $m\times m$ arbitrary full-rank matrix, where $m=2MTL$, and $\pmb{G}$ is an $m\times m$ lower triangular matrix, then one can easily show that
\begin{equation}\label{matrix_kk}
\det[(\pmb{MG})_{kk}]= \det(\pmb{M}_{kk})\det(\pmb{G}_{kk}),
\end{equation}
where $(\pmb{MG})_{kk}$, $\pmb{M}_{kk}$, and $\pmb{G}_{kk}$, are the lower $k\times k$ part of $\pmb{MG}$, $\pmb{M}$, and $\pmb{G}$, respectively.

Using the above result, one can express the determinant that appears in (\ref{Gamma_av}) as
\begin{equation}
\det(\pmb{R}_{kk}^{(\ell)\mathsf{T}}\pmb{R}^{(\ell)}_{kk})=
\det(\pmb{B}_{kk}^{(\ell)\mathsf{T}}\pmb{B}^{(\ell)}_{kk})
\det(\pmb{G}_{kk}^\mathsf{T}\pmb{G}_{kk})=\rho^{-rk/2ML}
\det(\pmb{R}_{kk}^{(\ell)\mathsf{T}}\pmb{R}^{(\ell)}_{kk})\det({\pmb{G}'}_{kk}^\mathsf{T}{\pmb{G}'}_{kk}),
\end{equation}
 Let $\mu_1\leq \mu_2\leq\cdots\leq\mu_k$ be the ordered nonzero eigenvalues of $\pmb{B}_{kk}^{(\ell)\mathsf{T}}\pmb{B}^{(\ell)}_{kk}$, for $k=1,\cdots,m$. Then,
$$\det(\pmb{B}_{kk}^{(\ell)\mathsf{T}}\pmb{B}^{(\ell)}_{kk})=\prod\limits_{j=1}^k\mu_j$$
 Note that for the special case when $k=m$ we have $\mu_{2(j-1)TL+1}=\cdots=\mu_{2jTL}=1+\rho\lambda_j((\pmb{H}^c)^\mathsf{H}\pmb{H}^c)$, for all $j=1,\cdots,M$.

Denote $\alpha'_i=-\log\mu_i/\log\rho$. Using (\ref{matrix_kk}), one can asymptotically express $\overline{\Gamma^*}_{\rm out}$ as
\begin{equation}
\overline{\Gamma^*}_{\rm out}\limi{=}m+\sum\limits_{k=1}^m(\log\rho)^{k/2}\mathsf{E}_{\pmb{\alpha}\notin\mathcal{O}_{ls}(\ell)}\{\rho^{c_k}\},
\end{equation}
where
\begin{equation}
c_k={1\over 2}\sum\limits_{j=1}^k \left({r_1\over ML} - \alpha'_j\right)^+.
\end{equation}
Now, since $c_k$ is non-decreasing in $k$, we have
\begin{equation}
\overline{\Gamma^*}_{\rm out}\limi{=}m+(\log\rho)^{m/2}\mathsf{E}_{\pmb{\alpha}\notin\mathcal{O}_{sl}(L)}\{\rho^{c_m}\},
\end{equation}
where
$$c_m=TL\sum\limits_{i=1}^M \left(\displaystyle{r_1\over ML}-(1-\alpha_i)^+\right)^+.$$

At multiplexing gain $r_1$, we have the channel is in outage only when $\sum_{j=1}^M(1-\alpha_j)^+\leq r_1/L$. In this case, we have
\begin{align*}
\mathsf{E}_{\pmb{\alpha}\notin\mathcal{O}_{sl}(L)}\{\rho^{c_m}\}&=\int\limits_{\pmb{\alpha}\notin\mathcal{O}_{sl}(L)}
\rho^{c_m} f_{\pmb{\alpha}}(\pmb{\alpha})\;d\pmb{\alpha}\\
&\limi{=}(\log\rho)^{m/2}\int\limits_{\pmb{\alpha}\notin\mathcal{O}_{sl}(L)}
\exp\left(\log\rho\left[TL\sum\limits_{i=1}^M\left({r_1\over ML}-(1-\alpha_i)^+\right)^+-\sum\limits_{i=1}^M (2i-1+N-M)\alpha_i\right]\right)\;d\pmb{\alpha}\\
&\limi{=}(\log\rho)^{m/2}\rho^{l(r_1)},
\end{align*}
where $\mathcal{O}_{sl}(L)=\left\{\pmb{\alpha}\in\mathbb{R}_+^M: \sum_{i=1}^{M}(1-\alpha_i)^{+}<r_1/L\right\}$, and
\begin{equation}\label{l'_r}
l(r_1)=\max_{\pmb{\alpha}\notin\mathcal{O}_{sl}(L)} \left[TL\sum\limits_{i=1}^M\left({r_1\over ML}-(1-\alpha_i)^+\right)^+-\sum\limits_{i=1}^M (2i-1+N-M)\alpha_i\right].
\end{equation}
It is not so difficult to see that the optimal channel coefficients that maximize (\ref{l'_r}) are
$$\alpha_i^*=1, \quad \hbox{for }i=1,\cdots,M-k,$$
and
$$\alpha_i^*=0, \quad \hbox{for }i=M-k+1,\cdots,M.$$
Substituting $\pmb{\alpha}^*$ in (\ref{l'_r}), we get
\begin{equation}
l(r_1)={Tr_1\over M}\left(M-{r_1\over L}\right)-\left(M-{r_1\over L}\right)\left(N-{r_1\over L}\right),
\end{equation}
for $r_1=0,1,\cdots,M$. An since $r_e\limi{=}r_1$, the asymptotic average computational complexity, when the channel is in outage, can be expressed as\footnote{As a reminder, the logarithm that appears in the complexity analysis is to the base 2.}
\begin{equation}\label{optimal_out}
\overline{\Gamma^*}_{\rm out}\limi{=}2MTL+(\log\rho)^{MTL}\rho^{l(r_e)}.
\end{equation}

One interesting special case of computing the optimal average time-out parameter $\overline{\Gamma^*}_{\rm out}$ is when $r_e=0$, i.e., when using a code with fixed rate $R_1$. In this case we have
\begin{equation}\label{optimal_out1}
\overline{\Gamma^*}_{\rm out}\limi{=}2MTL+{(\log\rho)^{MTL}\over \rho^{MN}}.
\end{equation}
Thus, the above equation describes how the time-out limit is related to the system parameters $(M,N,T,L,\rho)$ at the high SNR regime. As an example, consider a $2\times 2$ MIMO ARQ channel with $L=2$ rounds. Then, according to Theorem 2, we have $T=3$ is sufficient to achieve the optimal tradeoff. The signal dimension in this case is $m=24$. Assume $\rho=100$ (20 dB). According to (\ref{optimal_out}), the optimal time-out limit is given by $\overline{\Gamma^*}_{\rm out}\geq 98$. As will be shown in the sequel, this theoretical result closely matches the value of $\overline{\Gamma^*}_{\rm out}$ that is obtained experimentally (see Section V). Typical values of $b$ that corresponds to $\overline{\Gamma^*}_{\rm out}\approx 98$ are between $0.6$ and 1. For very small values of $b$, the average number of computations increases and according to that $\overline{\Gamma^*}_{\rm out}\gg 98$.

To see the great advantage of using the time-out lattice sequential decoder with constant bias term over the list lattice decoder implemented via sphere decoding algorithms, we compare the average computational complexity of both decoders. It has been shown in \cite{WD2} that, for moderate-to-high SNR, the average computations performed by the sphere decoder when the channel is not in outage, say $\Gamma_{\rm sphere}$ for a system with $m=2MTL$ signal dimension is given by (assuming fixed rate $r_1=0$)
\begin{equation}
\Gamma_{\rm sphere}\limi{=}2MTL+ {(\log\rho)^{2MTL}\over \rho^{MN}}.
\end{equation}
The ratio of the average complexity of both decoder, say $\gamma$, is given by
$$\gamma={\Gamma_{\rm sphere}\over {\overline{\Gamma^*}}_{\rm out}}\limi{=}{2MTL+(\log\rho)^{2MTL}/\rho^{MN}\over 2MTL + (\log\rho)^{MTL}/\rho^{MN}}.$$
This is a huge saving in computational complexity, especially for large signal dimensions even at very high SNR. For example, at $\rho=10^8$ (80 dB), $\gamma\approx 7.4$, i.e., the list sphere decoder's complexity is about 7 times larger than the complexity of the new proposed time-out lattice sequential decoder. For $\rho < 80$ dB, one would expect the ratio $\gamma\gg 7$. For extremely high SNR values (e.g., $\rho\geq 90$ dB), it seems that $\gamma\rightarrow 1$ as $\rho\rightarrow\infty$.

Moreover, some interesting remarks about the effect of extremely decreasing or increasing the value of $b$ on the performance-throughput-complexity tradeoff are discussed next. For very large SNR values, it is

It is well-known that as $b\rightarrow \infty$, lattice sequential decoders based on Schnnor-Euchner enumeration converts to the MMSE-DFE decoder \cite{Label7}. In terms of the total number of visited nodes, MMSE-DFE decoder achieves \textit{linear} computational complexity in $m$. In this case, for any $\Gamma_{\rm out}>m$, the message will always be decoded from the first round. Although it achieves high throughput and is computationally efficient, this decoder cannot achieve the optimal tradeoff. Assuming $N\geq M$, the maximum SNR exponent that such a decoder can achieve is $(N-M+1)(1-r_e/ML)^+$ (see \cite{MMSE2} for more details about MMSE decoding for the case of MIMO channel with no ARQ).

On the other hand, as $b\rightarrow 0$ the decoder achieves the best performance. However, the decoding complexity becomes equivalent to (and for some cases worst than) the complexity of lattice (sphere) decoding which is extensively large. Our main objective of using lattice sequential decoding is to save on decoding complexity. Therefore, one should appropriately select the bias term $b$ so that the optimal tradeoff is achieved while reducing the computational complexity. This may be achieved by the sequential decoder by ensuring that the path metric along the correct path increases on average, while decreases along other paths. In this case, we choose $b$ such that $\mathbb{E}_{\pmb{e}'}\{\mu(\pmb{z}_1^k)\}>0$ (assuming $\pmb{z}_1^m$ is the correct path). This corresponds to $b>\mathbb{E}\{|[\pmb{e}']_i|^2\}=1/2$. This fact is verified experimentally as will be shown in the next section.

As discussed earlier, the value of the parameter $\Gamma_{\rm out}$ used in the time-out algorithm is critical for achieving the optimal tradeoff of the MIMO ARQ system. In order to achieve the optimal tradeoff, both $b$ and $\Gamma_{\rm out}$ have to be appropriately selected in the time-out algorithm. Remember that the probability of error is upper bounded by
\begin{equation} \label{Pe1}
P_e\leq\sum\limits_{\ell=1}^{L-1}\underbrace{\Pr(E_{\ell},\mathcal{A}_{\ell})}_{\hbox{controlled by }\Gamma_{\rm out}}+\underbrace{\Pr(E_L)}_{\hbox{controlled by }b}
\end{equation}
where $\Pr(E_{\ell},\mathcal{A}_{\ell})\leq \rho^{-f(r_e/L)}$ is the probability of undetected error at round $\ell$ and is mainly controlled by the time-out parameter $\Gamma_{\rm out}$. However, the second term of the RHS of the above upper bound is affected by the value of the bias term $b$. Therefore, both $b$ and $\Gamma_{\rm out}$ have to be appropriately selected in the time-out algorithm so that a balance is obtained which may lead to achieving the optimal tradeoff of the channel.

Suppose that an optimal value of $b$ is set in the time-out algorithm. Then, choosing a very small or large value of $\Gamma_{\rm out}$ may result in a loss of the optimal tradeoff. This is because $\Gamma_{\rm out}$ can be seen as the parameter responsible for the amount of retransmission when error is detected. Choosing a large value of $\Gamma_{\rm out}$ reduces the probability of retransmission and may result in large performance degradation. In this case, as $\Gamma_{\rm out}\rightarrow \infty$, we have $\Pr(\mathcal{A}_{\ell})\rightarrow 1$. Therefore, the undetected error probability becomes equivalent to (for the long-term static channel)
$$
\Pr(E_{\ell},\mathcal{A}_{\ell})=\Pr(E_{\ell})\limi{=}\rho^{-f(r_1/\ell)}.
$$
Moreover, the probability of retransmission given the channel is not in outage approaches 0, which means that $r_e=r_1$ ($p(\ell)\rightarrow0$). And since the average error probability defined in (\ref{Pe}) is dominated by the term with the smallest SNR exponent, we have
\begin{equation}\label{bad}
P_e(\rho)\limi{\leq}\rho^{-f(r_1/L)}+\sum\limits_{\ell=1}^L\rho^{-f(r_1/\ell)}\limi{=}\rho^{-f(r_e)}.
\end{equation}
This is equivalent to the performance of the MIMO channel with no ARQ. On the other hand, choosing a very small value of $\Gamma_{\rm out}$ improves the performance\footnote{Performance improvement is achieved as a coding gain and not in the SNR exponent. In this case, the decoder will accumulate more information about the message before making a decision which is due to the fact that most of the time the decoder asks for retransmission.} at the expense of large throughput loss. In this case, the undetected error probability approaches 0, $p(\ell)\rightarrow 1$, and the throughput $\eta\rightarrow R_1/L$, i.e., $r_e=r_1/L$. The error probability at high SNR in this case is also given by~(\ref{bad}).

Therefore, in order to achieve the optimal tradeoff of the MIMO ARQ channel, first $b$ must be chosen to ensure that the achievability of the optimal diversity-multiplexing tradeoff when operating over the whole received signal (i.e., at round $L$). Then, $\Gamma_{\rm out}$ is selected accordingly so that the optimal diversity-multiplexing-delay tradeoff is achieved.

\section{Simulation Results}
In our simulation, we consider a long-term static MIMO ARQ link with $M=N=L=2$, $T=3$ and $R_1=8$ bits per channel use. The incremental redundancy LAST code is obtained as an $(m,p,k)$ Loeliger construction (refer to \cite{Label9} for a detailed description). The frame error rate and computational complexity are plotted in Fig.~4 and Fig.~5, respectively, for different values of $b$ used in the time-out algorithm. We measure the computational complexity of the joint lattice sequential decoder as the total number of nodes visited by the decoder during the search, accumulated over all ARQ rounds, until a new transmission is started. For every value of $b$, the optimal value of $\Gamma_{\rm out}$, denoted by $\Gamma_{\rm out}^*$, was found via simulation by trial and error. Using those optimal parameters, it is shown that the IR-LAST coding scheme decoded using the time-out lattice sequential decoder can achieve probability of error very close to the one that corresponds to the same IR-LAST coding scheme decoded using the list lattice decoder. This is achieved with significant reduction in complexity compared to the list lattice decoder (see Fig.~5). It is interesting to see how such IR-LAST coding scheme can achieve probability of error colose to the coherent LAST code with half the rate (4 bpcu). On the other hand, the effective rate, $R_e$, of the IR-LAST coding scheme decoded under the new proposed decoder is shown to approach $R_1=8$ as SNR grows as predicted by the theory. Optimal values of $\Gamma_{\rm out}$ for some special cases of $b$ are provided in Table 1. As expected, for values of $b<1/2$, the average computational complexity of the time-out algorithm increases and as a consequence, the value of $\Gamma_{\rm out}^*$ is proportionally increased. Simulation results demonstrate the excellent performance-complexity tradeoff achieved by the proposed algorithm for all values of $b$, especially at the moderate-to-high SNR regime (see Fig.~5).

{
\begin{table}[ht]
\caption{Optimum values of $\Gamma_{\rm out}$ for some special cases of $b$ used in the time-out algorithm for the case of $M=N=L=2$ and $T=3$ MIMO ARQ system using IR-LAST random code}
\begin{center}
\normalsize
\begin{tabular}{l*{2}{c}r}
\hline
$b$  & $\Gamma_{\rm out}^*$  \\
\hline
$0.6$ & 100   \\
$0.4$ & 800  \\
$0.1$  & $4\times 10^4$   \\
\hline
\end{tabular}
\end{center}
\end{table}
}

The error rates are obtained by averaging over at least 10~000 channel realizations at small SNRs and as much channel realization as required to count at least 100 frame errors at high SNRs.  It is clear that for values of $b>1/2$ the decoder exhibits some performance degradation compared to the incomplete list sphere decoder. To improve the performance one need to resort to smaller values of $b$ at the price of increasing computational complexity (see Fig.~4 and Fig.~5). In general and for all finite values of $b$, the time-out lattice stack sequential decoder has much lower computational complexity compared to the incomplete list sphere decoder.

\section{Conclusion}
In this paper, we have demonstrated, analytically and via simulation, the significant improvements achieved by the lattice stack sequential decoder over the incomplete list lattice decoder that is used for joint error detection and correction in the IR-LAST MIMO ARQ channel. A time-out algorithm has been proposed. Theoretical analysis and simulation results show that the optimal tradeoff can be achieved by such algorithm with very low decoding complexity compared to the list lattice decoder, especially for moderate-to-high SNR for which the list output could be extensively large.

\section*{Appendix I:\\ Proof of Lemma 1}
Let $\phi'(\pmb{z}_1^k,\ell)$ be the indicator function defined by
$$\phi'(\pmb{z}_1^k,\ell)=\begin{cases} 1, &\text{if $|{\pmb{e}''_{\ell}}_1^k-\pmb{R}^{(\ell)}_{kk}\pmb{z}_1^k|^2\leq bk-\mu_{\min}(\ell)$;}\cr
                                                          0, &\text{otherwise,}\end{cases}$$
where $\mu_{\min}(\ell)$ is the minimum metric along the decoded path. Then, it can be easily verified that
\begin{equation}\label{C221}
\sum_{\pmb{z}_1^k\in\mathbb{Z}^k}\phi(\pmb{z}_1^k,\ell)\leq\sum_{\pmb{z}_1^k\in\mathbb{Z}^k}\phi'(\pmb{z}_1^k,\ell),
\end{equation}
where $\phi(\pmb{z}_1^k,\ell)$ is as defined in (\ref{phi}), where a path may be extended by the stack decoder if the partial path metric at that node satisfies $\mu(\pmb{z}_1^k,\ell)\geq \mu_{\min}(\ell)$. Now, given $|\pmb{e}''_{\ell}|^2\leq MTL(1+\gamma)$, and by noticing that $-(\mu_{\min}(\ell)+|\pmb{e}''_{\ell}|^2)\leq 0$, we obtain
\begin{equation}
\sum_{\pmb{z}_1^k\in\mathbb{Z}^k}\phi'(\pmb{z}_1^k,\ell)\leq\sum_{\pmb{z}_1^k\in\mathbb{Z}^k}\phi^{''}(\pmb{z}_1^k,\ell),
\label{aa}
\end{equation}
where
\begin{equation}\label{C111}
\phi^{''}(\pmb{z}_1^k,\ell)=\begin{cases} 1, &\text{if $|{\pmb{e}''_{\ell}}_1^k-\pmb{R}^{(\ell)}_{kk}\pmb{z}_1^k|^2\leq bk+MTL(1+\gamma)$;}\cr
                                                          0, &\text{otherwise.}\end{cases}
                                                          \end{equation}
Notice the independence of the upper bound (\ref{aa}) on $\mu_{\min}(\ell)$. Now, let
$$
\phi^{'''}_k(\pmb{z},\ell)=\begin{cases} S_k(\ell), &\text{if $|{\pmb{e}''_{\ell}}-\pmb{R}_{\ell}\pmb{z}|^2\leq bm-\mu_{\min}(\ell)$;}\cr
                                                          0, &\text{otherwise,}\end{cases}
$$
where
\begin{equation}\label{Sk}
S_k(\ell)=\sum_{\pmb{z}_1^k\in\mathbb{Z}^k}\phi^{''}(\pmb{z}_1^k,\ell),
\end{equation}
then it can be easily shown that
$$\sum_{\pmb{z}_1^k\in\mathbb{Z}^k}\phi^{'}(\pmb{z}_1^k,\ell)\leq\sum_{\pmb{z}\in\mathbb{Z}^m}\phi^{'''}_k(\pmb{z},\ell)\leq\sum_{\pmb{x}\in\Lambda_c}\tilde{\phi}_k(\pmb{x},\ell),$$
where
$$\tilde{\phi}_k(\pmb{x},\ell)=\begin{cases} S_k(\ell), &\text{if $|\pmb{B}_{\ell}\pmb{x}|^2-2(\pmb{B}_{\ell}\pmb{x})^\mathsf{T}\pmb{e}'_{\ell}\leq bm$;}\cr
                                                          0, &\text{otherwise.}\end{cases}
                                                         $$
Consider now the following lemma:
\begin{lem}
At the $\ell$-th ARQ round, the number of nodes visited by the lattice stack sequential decoder at level $k=1,\cdots,m$, given that $|\pmb{e}'|^2\leq MTL(1+\gamma)$, can be upper bounded by (for any finite $b>0$)
\begin{equation}\label{lemma}
\sum_{\pmb{z}_1^k\in\mathbb{Z}^k}\phi(\pmb{z}_1^k,\ell)\leq S_k(\ell)\leq {{(4\pi)}^{k/2}\over \Gamma(k/2+1)}{[bk+MTL(1+\gamma)]^{k/2}\over \det(\pmb{R}_{kk}^{(\ell)\mathsf{T}}\pmb{R}^{(\ell)}_{kk})^{1/2}}=S'_{k}(\ell),
\end{equation}
where $S_k(\ell)$ is as defined in (\ref{Sk}).
\end{lem}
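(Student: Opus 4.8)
The plan is to read $S_k(\ell)$, defined in (\ref{Sk}), as a geometric lattice-point count and then to bound it by a volume comparison. By the definition of $\phi''$ in (\ref{C111}), the summand $\phi''(\pmb{z}_1^k,\ell)$ equals $1$ exactly when the point $\pmb{R}^{(\ell)}_{kk}\pmb{z}_1^k$ lies in the closed ball of radius $r=\sqrt{bk+MTL(1+\gamma)}$ centred at ${\pmb{e}''_\ell}_1^k$. Hence $S_k(\ell)$ is the number of points of the $k$-dimensional lattice $\Lambda_k=\pmb{R}^{(\ell)}_{kk}\mathbb{Z}^k$ falling inside that ball, and the fundamental volume of $\Lambda_k$ is exactly $V_k\triangleq\det(\pmb{R}^{(\ell)\mathsf{T}}_{kk}\pmb{R}^{(\ell)}_{kk})^{1/2}$, the quantity that appears in the denominator of $S'_k(\ell)$.

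First I would attach to each counted lattice point its Voronoi cell with respect to $\Lambda_k$. These cells are pairwise disjoint, each of volume $V_k$, and every cell whose centre lies inside the ball of radius $r$ is contained in the concentric ball of radius $r+r_{\rm cov}(\pmb{R}^{(\ell)}_{kk})$, because no point of a Voronoi cell is farther than the covering radius from its centre. Summing the disjoint cell volumes and comparing with the volume of the enlarged ball gives
\begin{equation*}
S_k(\ell)\,V_k\le V\!\left(\mathcal{S}_k\!\left(r+r_{\rm cov}\right)\right)=\frac{\pi^{k/2}\left(r+r_{\rm cov}\right)^k}{\Gamma(k/2+1)},
\end{equation*}
so that $S_k(\ell)\le \pi^{k/2}(r+r_{\rm cov})^k/[\Gamma(k/2+1)\,V_k]$.

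It then remains to absorb the boundary enlargement into the stated constant. Writing the target as $S'_k(\ell)=(4\pi)^{k/2}r^k/[\Gamma(k/2+1)V_k]=\pi^{k/2}(2r)^k/[\Gamma(k/2+1)V_k]$, one sees that the claim is exactly the previous estimate with the radius doubled, so it follows as soon as $r+r_{\rm cov}\le 2r$, i.e. $r_{\rm cov}(\pmb{R}^{(\ell)}_{kk})\le r$. This last step is the crux and the part I expect to be the main obstacle: it requires that the sphere radius $r=\sqrt{bk+MTL(1+\gamma)}$ dominate the covering radius of $\Lambda_k$. Since $\Lambda_k$ is generated by the leading $k$ coordinates of the MMSE-DFE code--channel matrix, its covering radius stays controlled whenever the channel is not severely ill-conditioned, while the additive term $MTL(1+\gamma)$ (which grows with the deadline parameter, and ultimately with $\log\rho$ through $\gamma$) forces $r$ to be large; the factor $4$ under the root is precisely the slack chosen to guarantee this radius-doubling. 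The triangular form of $\pmb{R}^{(\ell)}_{kk}$ offers an independent check, since $V_k=\prod_{i}|[\pmb{R}^{(\ell)}_{kk}]_{ii}|$ and a coordinate-by-coordinate enumeration of the admissible integers reproduces the same product in the denominator. Combining this with the chain (\ref{C221})--(\ref{aa}) already established in the proof of Lemma~1 yields $\sum_{\pmb{z}_1^k\in\mathbb{Z}^k}\phi(\pmb{z}_1^k,\ell)\le S_k(\ell)\le S'_k(\ell)$, which is the assertion.
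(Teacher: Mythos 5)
Your reading of $S_k(\ell)$ as a count of points of $\Lambda_k=\pmb{R}^{(\ell)}_{kk}\mathbb{Z}^k$ in a ball of radius $r=\sqrt{bk+MTL(1+\gamma)}$, and your Voronoi-cell packing bound $S_k(\ell)\le V(\mathcal{S}_k(r+r_{\rm cov}))/V_k$, are both sound, but the step you yourself flag as the crux is a genuine gap, and it cannot be closed. The needed domination $r_{\rm cov}(\Lambda_k)\le r$ is false precisely in the regime where the lemma is applied: in the proof of Lemma 1 one sets $\gamma=\zeta\log\rho$, so $r$ grows only like $\sqrt{\log\rho}$, whereas when the channel is \emph{not} in outage the fundamental volume $V_k=\det(\pmb{R}^{(\ell)\mathsf{T}}_{kk}\pmb{R}^{(\ell)}_{kk})^{1/2}$ grows like a positive power of $\rho$ --- this is exactly what makes $S'_k(\ell)$ small and the decoder fast --- and since $r_{\rm cov}\ge r_{\rm eff}=\left[V_k/V(\mathcal{S}_k(1))\right]^{1/k}$, the covering radius also grows polynomially in $\rho$ and dwarfs $r$. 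Your physical intuition is inverted: a well-conditioned channel makes the channel-code lattice $\pmb{B}_\ell\pmb{G}$ \emph{coarser}, not finer, so $r_{\rm cov}$ is largest exactly where the target bound $S'_k(\ell)$ is smallest; there your method yields only the far weaker estimate $V(\mathcal{S}_k(r+r_{\rm cov}))/V_k$, which is useless for the complexity-exponent analysis. (Indeed no pointwise argument can patch this once $S'_k(\ell)<1$: the origin is always one of the counted points, so $S_k(\ell)\ge1$; any volume-ratio count, including the one the paper invokes, is only meaningful with this caveat.)

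The paper obtains the factor $4$ by a different mechanism, one your proof never uses: the hypothesis $|\pmb{e}'_\ell|^2\le MTL(1+\gamma)$ enters a \emph{second} time. Since the ball's center ${\pmb{e}''_{\ell}}_1^k$ then has norm at most $\sqrt{MTL(1+\gamma)}\le r$, the elementary inclusion $\{|\pmb{u}-\pmb{v}|^2\le\alpha,\;|\pmb{u}|^2\le\alpha\}\subseteq\{|\pmb{v}|^2\le4\alpha\}$ (triangle inequality, applied with $\pmb{u}={\pmb{e}''_{\ell}}_1^k$, $\pmb{v}=\pmb{R}^{(\ell)}_{kk}\pmb{z}_1^k$, $\alpha=bk+MTL(1+\gamma)$) shows that every counted lattice point lies in the ball of radius $2r$ centered at the \emph{origin}, i.e., at a lattice point; the count is then bounded by the volume ratio $V(\mathcal{S}_k(2r))/V_k=S'_k(\ell)$ via the cited counting estimate. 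So the radius doubling is the price of recentering the ball at a lattice point, not slack reserved to absorb a covering radius. The tell-tale sign of your gap is that your Voronoi argument would apply verbatim to a ball centered anywhere in $\mathbb{R}^k$, whereas the paper's second inequality genuinely exploits where the center sits.
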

\begin{proof}
See Appendix II.
\end{proof}

The tail distribution, given the channel is not in outage, can then be upper bounded as follows
\begin{equation}\label{UBb2}
\Pr(\mathcal{N}_m(\ell)\geq \Gamma_{\rm out}|\overline{\mathcal{O}}_{ls}(\ell))\leq \Pr(|\pmb{e}'_{\ell}|^2>MTL(1+\gamma))+\Pr(\mathcal{N}_m(\ell)\geq \Gamma_{\rm out},|\pmb{e}'_{\ell}|^2\leq MTL(1+\gamma)|\overline{\mathcal{O}}_{ls}(\ell)).
\end{equation}

For a given lattice $\Lambda_c$, using Markov inequality, we have
\begin{equation}
\begin{split}
\Pr(\mathcal{N}_m(\ell)\geq \Gamma_{\rm out}|\Lambda_c,\overline{\mathcal{O}}_{ls}(\ell),|\pmb{e}'_{\ell}|^2\leq MTL(1+\gamma))&\leq\Pr(\tilde{\mathcal{N}}_m(\ell)\geq \Gamma_{\rm out}-m|\Lambda_c,\overline{\mathcal{O}}_{ls}(\ell),|\pmb{e}'_{\ell}|^2\leq MTL(1+\gamma))\cr
&\leq{\mathbb{E}_{\pmb{e}'}\{\tilde{\mathcal{N}}_m(\ell)|\Lambda_c,\overline{\mathcal{O}}_{ls}(\ell),|\pmb{e}'_{\ell}|^2\leq MTL(1+\gamma)\}\over \Gamma_{\rm out}-m}, \quad \text{for $\Gamma_{\rm out}>m$,}
\end{split}
\end{equation}
where $\tilde{\mathcal{N}}_m(\ell)$, assuming all-zero codeword was transmitted, is defined as
$$\tilde{\mathcal{N}}_m(\ell)=\sum_{k=1}^{m}\sum_{\pmb{z}_1^k\in\mathbb{Z}^k\backslash\{\pmb{0}\}}\phi(\pmb{z}_1^k,\ell).$$
Using Lemma 2, the conditional average of $\tilde{\mathcal{N}}_m(\ell)$ with respect to the noise can be further upper bounded as
\begin{equation}\label{NUB2}
\begin{split}
\mathbb{E}_{\pmb{e}'}\{\tilde{\mathcal{N}}_m(\ell)|\Lambda_c,\overline{\mathcal{O}}_{ls}(\ell),|\pmb{e}'_{\ell}|^2\leq MTL(1+\gamma)\}&\leq \sum\limits_{k=1}^m S'_k(\ell)\sum\limits_{\pmb{x}\in\Lambda_c^*}\Pr(|\pmb{B}_{\ell}\pmb{x}|^2-2(\pmb{B}_{\ell}\pmb{x})^{\mathsf{T}}\pmb{e}'_{\ell}<bm).
\end{split}
\end{equation}
Therefore,
\begin{equation}\label{UUU}
\Pr(\mathcal{N}_m(\ell)\geq \Gamma_{\rm out}|\Lambda_c,\overline{\mathcal{O}}_{ls}(\ell),|\pmb{e}'_{\ell}|^2\leq MTL(1+\gamma))\leq{\sum\limits_{k=1}^m S'_k(\ell)\over \Gamma_{\rm out}-m}\sum\limits_{\pmb{x}\in\Lambda_c^*}\Pr(|\pmb{B}_{\ell}\pmb{x}|^2-2(\pmb{B}_{\ell}\pmb{x})^{\mathsf{T}}\pmb{e}'_{\ell}<bm).
\end{equation}

We would like now to upper bound the term inside the summation in (\ref{UUU}). The difficulty here stems from the non-Gaussianity of the random vector $\pmb{e}'_\ell$ for any finite $T$. To overcome this problem, consider the following:

Let
$$\tilde{\pmb{e}}_\ell=[\pmb{B}_\ell-\pmb{F}_\ell\pmb{H}_\ell]\pmb{g}_\ell+\pmb{F}_\ell(\pmb{e}_\ell+\pmb{w}_\ell),$$
where $\pmb{g}_\ell\sim\mathcal{N}(0,\sigma^2\pmb{I}_m)$, $\pmb{w}_\ell\sim\mathcal{N}(0,(\sigma^2-1/2)\pmb{I}_m)$ and $\sigma^2\geq1/2$. Following the footsteps of \cite{Label6}, it can be shown that by appropriately constructing a nested LAST code we have that
\begin{equation}\label{NUB3}
\Pr(|\pmb{B}_{\ell}\pmb{x}|^2-2(\pmb{B}_{\ell}\pmb{x})^{\mathsf{T}}\pmb{e}'_{\ell}<bm)\leq\beta_m\Pr(|\pmb{B}_{\ell}\pmb{x}|^2-2(\pmb{B}_{\ell}\pmb{x})^{\mathsf{T}}\tilde{\pmb{e}}_{\ell}<bm),
\end{equation}
where $\tilde{\pmb{e}}_{\ell}\sim\mathcal{N}(0,0.5\pmb{I}_m)$, and $\beta_m$ is a constant independent of $\rho$.
Using Chernoff bound,
\begin{equation}\label{UB5}
\Pr(|\pmb{B}_{\ell}\pmb{x}|^2-2(\pmb{B}_{\ell}\pmb{x})^{\mathsf{T}}\tilde{\pmb{e}}_{\ell}<bm)\leq\begin{cases}e^{-|\pmb{B}_{\ell}\pmb{x}|^2/8}e^{bm/4}, & \text{$|\pmb{B}_{\ell}\pmb{x}|^2>bm$;}\cr 1, &\text{$|\pmb{B}_{\ell}\pmb{x}|^2\leq bm.$} \end{cases}
\end{equation}
By taking the expectation over the ensemble of random lattices (see \cite{Label9}, Theorem 4)
\begin{equation}\label{Mink12}
\mathbb{E}_{\Lambda_c}\left\{\sum\limits_{\pmb{x}\in\Lambda_c^*}\Pr(|\pmb{B}_{\ell}\pmb{x}|^2-2(\pmb{B}_{\ell}\pmb{x})^{\mathsf{T}}\pmb{e}'_{\ell}<bm)\right\}\leq{\beta_m\over V_c}\Bigg\{{\pi^{m/2}(bm)^{m/2}\over\Gamma(m/2+1) \det(\pmb{B}^{\mathsf{T}}_{\ell}\pmb{B}_{\ell})^{1/2}}+ {(8\pi)^{m/2}e^{bm/4}\over  \det(\pmb{B}^{\mathsf{T}}_{\ell}\pmb{B}_{\ell})^{1/2}}\Bigg\}.
\end{equation}

Substituting (\ref{BL}) and (\ref{M}) in (\ref{Mink12}), and by selecting $\Gamma_{\rm out}\geq m+\sum_{k=1}^m S'_k(\ell),$
we get
\begin{equation}\label{UBa2}
\Pr(\mathcal{N}_m\geq \Gamma_{\rm out}|\overline{\mathcal{O}}_{ls}(\ell))\limi{\leq}\rho^{-T\ell[\sum_{i=1}^{M}(1-\alpha_i)^+-{r_1/\ell}]}.
\end{equation}
Now, by setting $\gamma=\zeta\log\rho$, the first term in the RHS of (\ref{UBb2}) can be shown to be upper bounded by $\rho^{-f(r_1/\ell)}$, as long as $\zeta$ is chosen sufficiently large such that $MTL\zeta\geq f(r_1/\ell)$ (see \cite{Label1}, Appendix IV). Averaging the second term in the RHS of (\ref{UBa2}) over the channels in $\overline{\mathcal{O}}_{ls}(\ell)$ set, we obtain ,
\begin{equation}\label{PEs}
\Pr(\mathcal{N}_m(\ell)\geq \Gamma_{\rm out})\limi{\leq}\rho^{-f(r_1/\ell)}+\int_{\overline{\mathcal{O}}_{ls}(\ell)}f_{\pmb{\alpha}}(\pmb{\alpha})\Pr(\mathcal{N}_m\geq \Gamma_{\rm out}|\pmb{\alpha})\;d\pmb{\alpha}\limi{\leq}\rho^{-f(r_1/\ell)},
\end{equation}
where $f_{\pmb{\alpha}}(\pmb{\alpha})$ is the joint probability density function of $\pmb{\alpha}$ which, for all $\pmb{\alpha}\in\overline{\mathcal{O}}_{ls}(\ell)$, is asymptotically given by (see~\cite{Label2})
\begin{equation}\label{pdf}
f_{\pmb{\alpha}}(\pmb{\alpha})\limi{=}\exp\left(-\log(\rho)\sum\limits_{i=1}^{M}(2i-1+N-M)\alpha_i\right).
\end{equation}

Since $\mathcal{N}_j(\ell)<\mathcal{N}_m(\ell)$, for all $1\leq j<m$, then
$$\Pr(\mathcal{N}_j(\ell)\geq \Gamma_{\rm out})\leq \Pr(\mathcal{N}_m(\ell)\geq \Gamma_{\rm out})\limi{\leq} \rho^{-f(r_1/\ell)}.$$

\section*{Appendix II:\\ Proof of Lemma 2}
In the proof of Lemma 1, we have shown that computational complexity at level $k$ when $|\pmb{e}'_{\ell}|^2\leq MTL(1+\gamma)$, can be upper bounded by
$$\sum_{\pmb{z}_1^k\in\mathbb{Z}^k}\phi(\pmb{z}_1^k,\ell)\leq\sum_{\pmb{z}_1^k\in\mathbb{Z}^k}\phi^{''}(\pmb{z}_1^k,\ell)=S_k(\ell),$$
where $\phi^{''}(\pmb{z}_1^k,\ell)$ is the indicator function defined in (\ref{C111}). For brevity, we drop the subscript $\ell$ in the rest of the proof.

Given that $|\pmb{e}|^2\leq R_s^2$, it must follow that $|\pmb{e}_1^k|\leq R_s^2$, where $\pmb{e}_1^k$ is the last $k$ components of $\pmb{e}$. Without loss of generality, we assume that all-zero lattice point was transmitted. Let
 \begin{equation}\label{phi1}
\phi'(\pmb{z}_1^k)=\begin{cases} 1, &\text{if $|{\pmb{e}'}_1^k-\pmb{R}_{kk}\pmb{z}_1^k|^2\leq bk+R_s^2$, $|{\pmb{e}'}_1^k|^2\leq R_s^2$;}\cr
                                                          0, &\text{otherwise.}\end{cases}
                                                          \end{equation}
where $R_s^2=MTL(1+\gamma)$. The total number of integer lattice points that satisfy (\ref{phi1}) is given by
  \begin{equation}\label{L1}
  \mathcal{N}_k\leq\sum\limits_{\pmb{z}_1^k\in\mathbb{Z}^k}\phi'(\pmb{z}_1^k)\leq\sum\limits_{\pmb{z}_1^k\in\mathbb{Z}^k}\phi''(\pmb{z}_1^k) .
  \end{equation}
where
 \begin{equation}\label{phi2}
\phi''(\pmb{z}_1^k)=\begin{cases} 1, &\text{if $|{\pmb{e}'}_1^k-\pmb{R}_{kk}\pmb{z}_1^k|^2\leq bk+R_s^2$, $|{\pmb{e}'}_1^k|^2\leq bk+ R_s^2$;}\cr
                                                          0, &\text{otherwise.}\end{cases}
                                                          \end{equation}
 In general one can show that for any random vectors $\pmb{u}$ and $\pmb{v}$, and $\alpha>0$, it holds$\{|\pmb{u}-\pmb{v}|^2\leq \alpha,|\pmb{u}|^2\leq \alpha\}\subseteq \{|\pmb{u}|^2\leq 4\alpha\}$. Therefore, we can further upper bound (\ref{L1}) as
 \begin{equation}
\mathcal{N}_k\leq \sum\limits_{\pmb{z}_1^k\in\mathbb{Z}^k}\hat{\phi}(\pmb{z}_1^k),
 \end{equation}
 where
 \begin{equation}\label{phi2}
\hat{\phi}(\pmb{z}_1^k)=\begin{cases} 1, &\text{if $|\pmb{R}_{kk}\pmb{z}_1^k|^2\leq 4(bk+R_s^2)$;}\cr
                                                          0, &\text{otherwise.}\end{cases}
                                                          \end{equation}
The summation of $\hat{\phi}(\pmb{z}_1^k)$ over all integer lattice points $\pmb{z}_1^k\in\mathbb{Z}^k$ can then be easily upper bounded by (see [13])
$$\mathcal{N}_k\leq{V(\mathcal{S}_k(2\sqrt{bk+R^2_s}))\over\det(\pmb{R}_{kk}^\mathsf{T}\pmb{R}_{kk})^{1/2}}.$$

\section*{Appendix III:\\ Proof of Theorem 2}
First, the error probability (\ref{Pe}) is lower bounded by the probability of error of the optimal maximum-likelihood or the MMSE-DFE lattice decoder that operates on the whole received signal vector $\pmb{y}=\pmb{y}_L$ knowing the channel matrix $\pmb{H}^c$. Hence, it can be easily shown that (see~\cite{Label8} or \cite{Label2})
\begin{equation}\label{BB0}
P_e(\rho)\geq P(E_{ld}(\pmb{B}_L\pmb{G}))\limi{=}\outage{L}.
\end{equation}
The input to the decoder at round $\ell$, after QR preprocessing of (\ref{estimation1}) is given by
\begin{equation}
\pmb{y}''_{\ell}=\pmb{Q}_{\ell}^\mathsf{T}\pmb{y}_{\ell}=\pmb{R}_{\ell}\pmb{z}+\pmb{e}''_{\ell},
\end{equation}
where $\pmb{e}''_{\ell}=\pmb{Q}_{\ell}^\mathsf{T}\pmb{e}'_{\ell}$.

Consider the long-term static channel. For the probability of error we bound each term in (\ref{Pe}). At round $L$, due to lattice symmetry, we assume that the all zero codeword, i.e., $\pmb{0}$, was transmitted. At high SNR, we have shown in section III that, for a given lattice $\Lambda_c$, the error probability of the lattice stack sequential decoder can be upper bounded as
\begin{equation}\label{bounds1}
P(E_{sd}(\pmb{B}_L,b)|\Lambda_c)\limi{\leq} P(E_{ld}(\tilde{\pmb{B}}_L)|\Lambda_c).
\end{equation}
where $\tilde{\pmb{B}}_L=(1-\epsilon)\pmb{B}_L$, and $0\leq\epsilon\leq 1$. Using the union bound, we can further upper bound (\ref{bounds1}) by
\begin{equation}\label{UB3}
P(E_{sd}(\pmb{B}_L,b)|\Lambda_c)\limi{\leq} \sum\limits_{\pmb{x}\in\Lambda_c^{*}}\Pr(2(\tilde{\pmb{B}}_L\pmb{x})^{\mathsf{T}}\pmb{e}'>|\tilde{\pmb{B}}_L\pmb{x}|^2).
\end{equation}

As shown in the proof of Theorem 1 (see Appendix I), by appropriately constructing a nested LAST code we have that
\begin{equation}\label{UB4}
P(E_{sd}(\pmb{B}_L,b)|\Lambda_c)\leq \beta_m\sum\limits_{\pmb{x}\in\Lambda_c^{*}}\Pr(2(\tilde{\pmb{B}}_L\pmb{x})^{\mathsf{T}}\tilde{\pmb{e}}>|\tilde{\pmb{B}}_L\pmb{x}|^2),
\end{equation}
where $\tilde{\pmb{e}}\sim\mathcal{N}(0,1/2\pmb{I}_m)$, and $\beta_m$ is a constant independent of $\rho$. Using Chernoff bound,
\begin{equation}\label{UB51}
\Pr(2(\pmb{B}'_L\pmb{x})^{\mathsf{T}}\tilde{\pmb{e}}>|\tilde{\pmb{B}}_L\pmb{x}|^2)\leq e^{-|\tilde{\pmb{B}}_L\pmb{x}|^2/8}.
\end{equation}
By taking the expectation over the ensemble average of random lattices (see \cite{Label9}, Theorem 4)
\begin{equation}\label{Mink1}
\begin{split}
&P(E_{sd}(\pmb{B}_L,b))=\mathbb{E}_{\Lambda_c}\{P(E_{sd}(\pmb{B}_L,b)|\Lambda_c)\}\cr
&\leq{\beta_m\over V_c}\int\limits_{|\pmb{B}'_L\pmb{x}|^2>0}e^{-|\pmb{B}'_L\pmb{x}|^2/8}\;d\pmb{x}\leq{(8\pi)^{m/2}\over  V_c(1-\epsilon)^{m/2}\det(\pmb{B}_{L}^{\mathsf{T}}\pmb{B}_{L})^{1/2}}.
\end{split}
\end{equation}
Substituting (\ref{BL}) (with $\ell=L$) and (\ref{M}) in (\ref{Mink1}), the average error probability at round $L$, given the channel is not in outage can be rewritten as
\begin{equation}
P(E_{sd}(\pmb{B}_L,b)|\overline{\mathcal{O}}_{ls}(L))\limi{\leq} \mathcal{K}(m,\epsilon)\rho^{-TL[\sum_{j=1}^{M}(1-\alpha_j)^{+}-{r_1\over L}]},
\end{equation}
where $\mathcal{K}(m,\epsilon)$ is a constant independent of $\rho$. Using the results in \cite{Label8}, the average probability of error (average over the channel and lattice ensemble) is asymptotically upper bounded by
\begin{equation}\label{BB1}
\begin{split}
 P(E_{sd}(\pmb{B}_L,b))&\leq\Pr(\mathcal{O}_{ls}(L))+ P(E_{sd}(\pmb{B}_L,b),\overline{\mathcal{O}}_{ls}(L))\\
&\limi{\leq}\rho^{-f(r_1/L)} + \int_{\overline{\mathcal{O}}_{ls}(L)}f_{\pmb{\alpha}}(\pmb{\alpha})P(E_{sd}(\pmb{B}_L,b)|\overline{\mathcal{O}}_{ls}(L))\;d\pmb{\alpha}\limi{\leq}\rho^{-f(r_1/L)},
\end{split}
\end{equation}
under the condition that $LT\geq M+N-1$, where $f_{\pmb{\alpha}}(\pmb{\alpha})$ is given in (\ref{pdf}).

The proof for the short-term static channel follows the same lines with the exception that
$$\det(\pmb{B}_{L}^{\mathsf{T}}\pmb{B}_{L})=\sum\limits_{j=1}^{L}\det\left(\pmb{I}+{\rho\over M}(\pmb{H}_j^c)^\mathsf{H}\pmb{H}_j^c\right)^{2T}.$$
In this case, one can easily verify that
\begin{equation}\label{BB2}
 P(E_{sd}(\pmb{B}_L,b))\limi{\leq}\rho^{-Lf(r_1/L)},
\end{equation}
under the condition that $T\geq M+N-1$.

Thus, the error probability in (\ref{Pe}) can be asymptotically upper bounded as
\begin{equation*}
P_e(\rho)\limi{\leq}\begin{cases}\rho^{-f(r_1/L)},& \textit{for long-term static channel}; \cr
\rho^{-Lf(r_1/L)},& \textit{for short-term static channel}.
\end{cases}
\end{equation*}

Next, we need to show that $r_e\limi{=}r_1$. As discussed previously, when the channel is not in outage, a retransmission is requested by the time-out lattice sequential decoder at round $\ell$, whenever the number of computations $\mathcal{N}_j(\ell)$ at any level $1\leq j\leq m$ exceeds $\Gamma_{\rm out}$. Therefore, the probability of a retransmission at round $\ell$ when the channel is not in outage, can be expressed as
\begin{eqnarray*}
\label{1}
\Pr(\overline{\mathcal{A}}_{\ell},\overline{\mathcal{O}}_{ls}(\rho,\ell))=\Pr\biggl(\bigcup_{j=1}^m\{\mathcal{N}_j(\ell)\geq \Gamma_{\rm out}\}\biggr).
\end{eqnarray*}
And since $\mathcal{N}_m(\ell)> \mathcal{N}_j(\ell)$ $\forall 1\leq j\leq m$, using the union bound we can upper bound the above probability by
\begin{equation*}
\Pr(\overline{\mathcal{A}}_{\ell},\overline{\mathcal{O}}_{ls}(\rho,\ell))\leq m\Pr(\mathcal{N}_m(\ell)\geq \Gamma_{\rm out}).
\end{equation*}

Therefore, we have
\begin{eqnarray}
p(\ell)&\leq&{\rm Pr}(\overline{\mathcal{A}}_{\ell})\cr
          &=&{\rm Pr}(\mathcal{O}_{ls}(\rho,\ell))+{\rm Pr}(\overline{\mathcal{A}}_{\ell}, \overline{\mathcal{O}}_{ls}(\rho,\ell)).
\end{eqnarray}
The behavior of the first term at high SNR is $\rho^{-f(r_1/\ell)}$. Using Lemma 1, we get
\begin{equation*}
{\rm Pr}(\overline{\mathcal{A}}_{\ell}, \overline{\mathcal{O}}_{ls}(\rho,\ell))\limi{\leq}\rho^{-d_{\ell}}.
\end{equation*}
where $d_{\ell}$ is as defined in Theorem 1.

Following \cite{Label8}, for any $T\geq 1$ we find that $d_{\ell}>0$ for all $\ell$ and $r_1<M$. Moreover, if $T\ell\geq M+N-1$ then $d_{\ell}=f(r_1/\ell)$, which is the maximum possible SNR exponent for codes with multiplexing gain $r_1/\ell$ and block length $T\ell$.

Therefore, we have that
$$p(\ell)\limi{\leq}\rho^{-\min\{f(r_1/\ell),d_{\ell}\}}.$$
Using (\ref{eta}), we obtain
$${R_1\over1+\sum_{\ell=1}^L\rho^{-\min\{f(r_1/\ell),d_{\ell}\}}}\limi{\leq}\eta\leq R_1.$$
This implies that $r_e\limi{=}r_1$, and therefore for the long-term static channel, $d_{ls}^*(r_e,L)$ is achievable. The proofs for the short-term static channel follow exactly the same arguments, and one can show that $d_{ss}^*(r_e,L)$ is achievable under time-out lattice sequential decoding.



\begin{figure}[ht!]
\begin{center}
\includegraphics[width=5in]{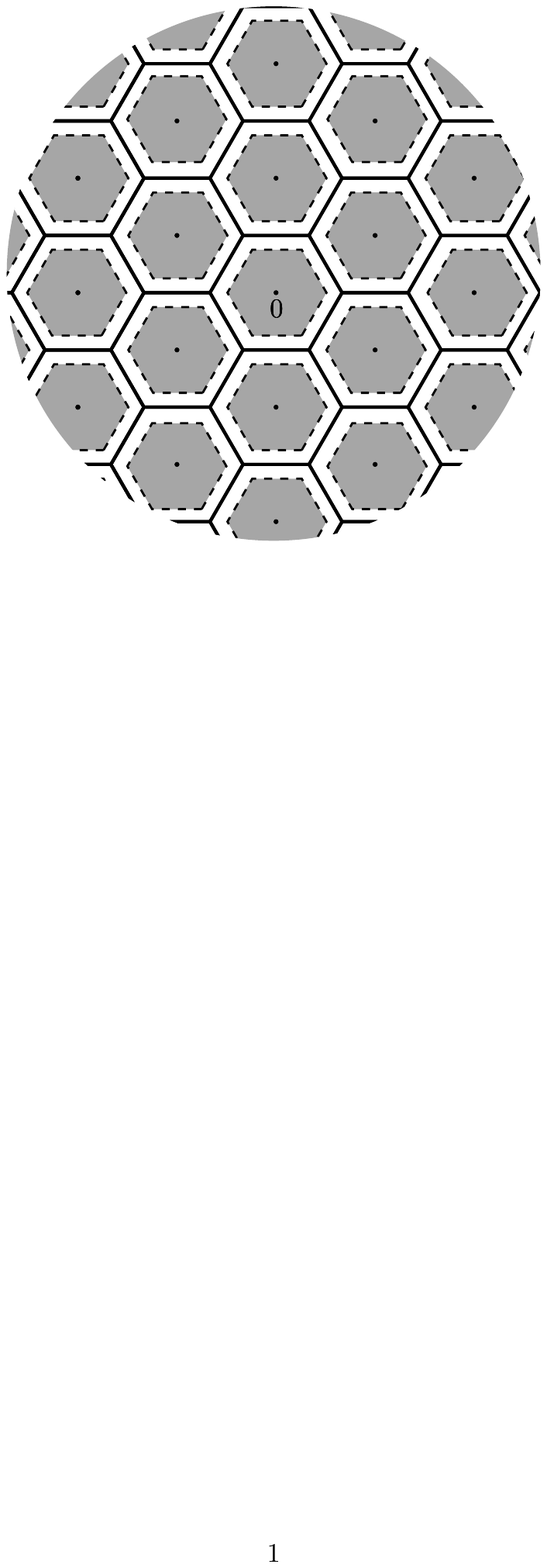}
\caption{The division of the Voronoi cell of the lattice generated by the channel-code matrix $\pmb{B}_\ell\pmb{G}$ into two distinct regions --- the shaded region $\mathcal{R}_{\pmb{u}}(\tilde{\pmb{B}}_\ell\pmb{G})$, and the white region $\mathcal{V}_{\pmb{u}}(\pmb{B}_\ell\pmb{G})\backslash\mathcal{R}_{\pmb{u}}(\tilde{\pmb{B}}_\ell\pmb{G})$. The two dimensional hexagonal lattice is shown for illustration purposes.}
\end{center}
\end{figure}

\begin{figure}[ht!]
\centering

  \subfigure[The event of sending a NACK when the channel is not in outage.]{
   \includegraphics[width =2in] {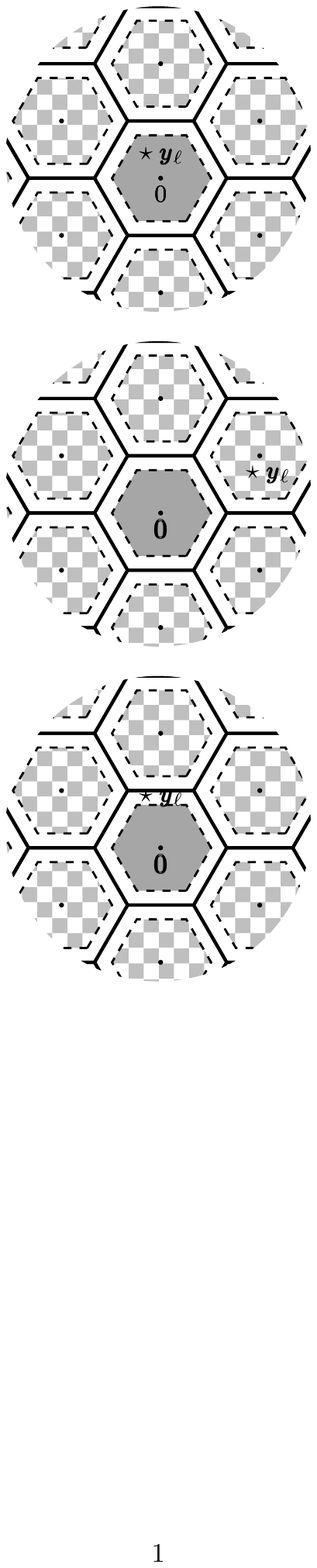}
 }

\subfigure[The event of sending an ACK with correct decoding.]{
   \includegraphics[width =2in] {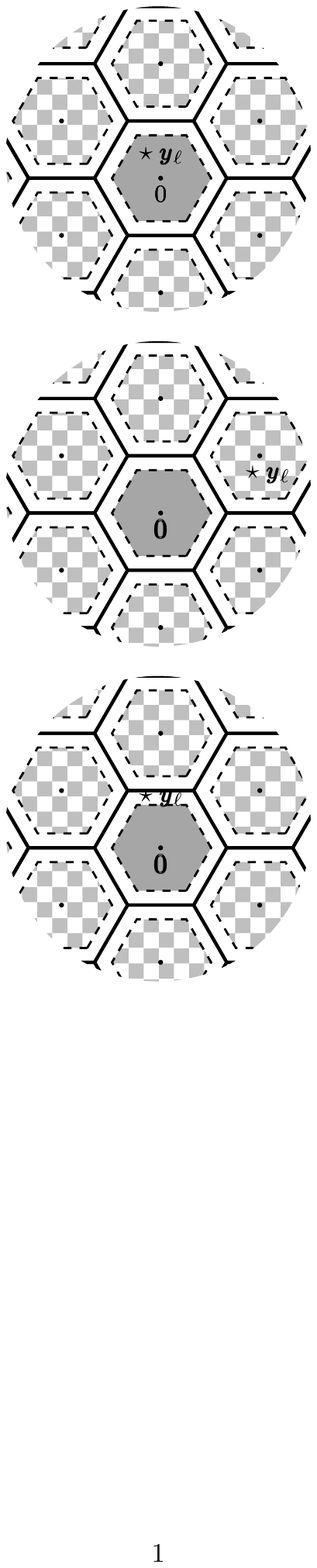}
 }
\quad
 \subfigure[The event of sending an ACK with decoding failure but not detected.]{
   \includegraphics[width=2in] {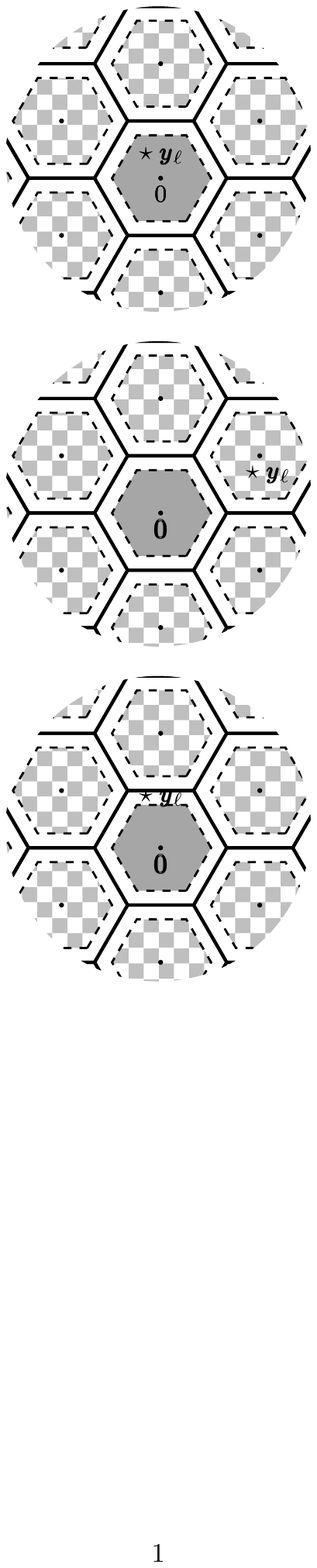}
 }

\caption{The events of retransmission, correct decoding, and undetected error that occur in the time-out algorithm (assuming $\pmb{0}$ was transmitted). The correct decoding region is represented in dark color. The chessboard shaded regions represent the undetected error events $(E_{\ell},\mathcal{A}_{\ell})$. The white region represents the undetected error event.}
\end{figure}

\begin{figure}[ht!]
\begin{center}
\includegraphics[width=4in]{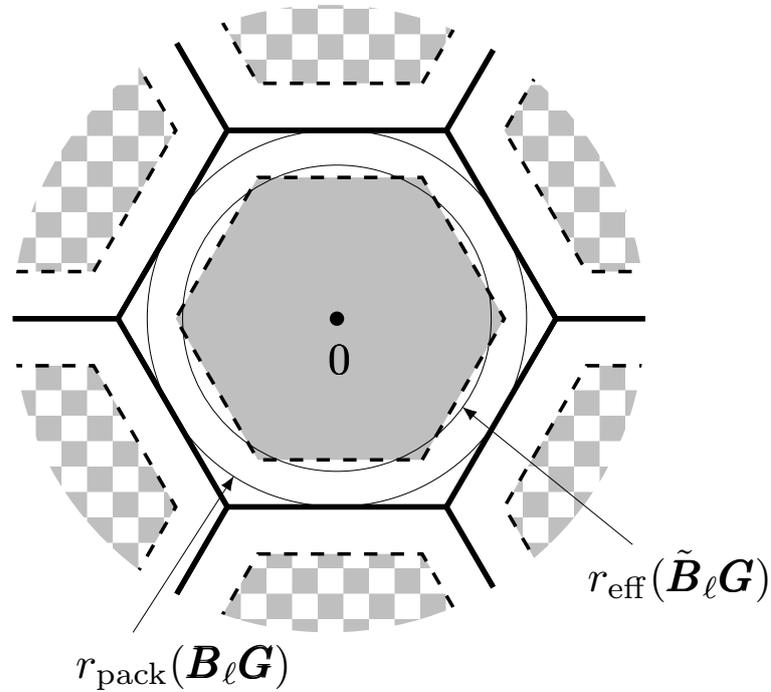}
\caption{A geometric approach used to over bound the undetected error probability under the time-out algorithm.}
\end{center}
\end{figure}

\begin{figure}[ht!]
\begin{center}
\includegraphics[width=4in]{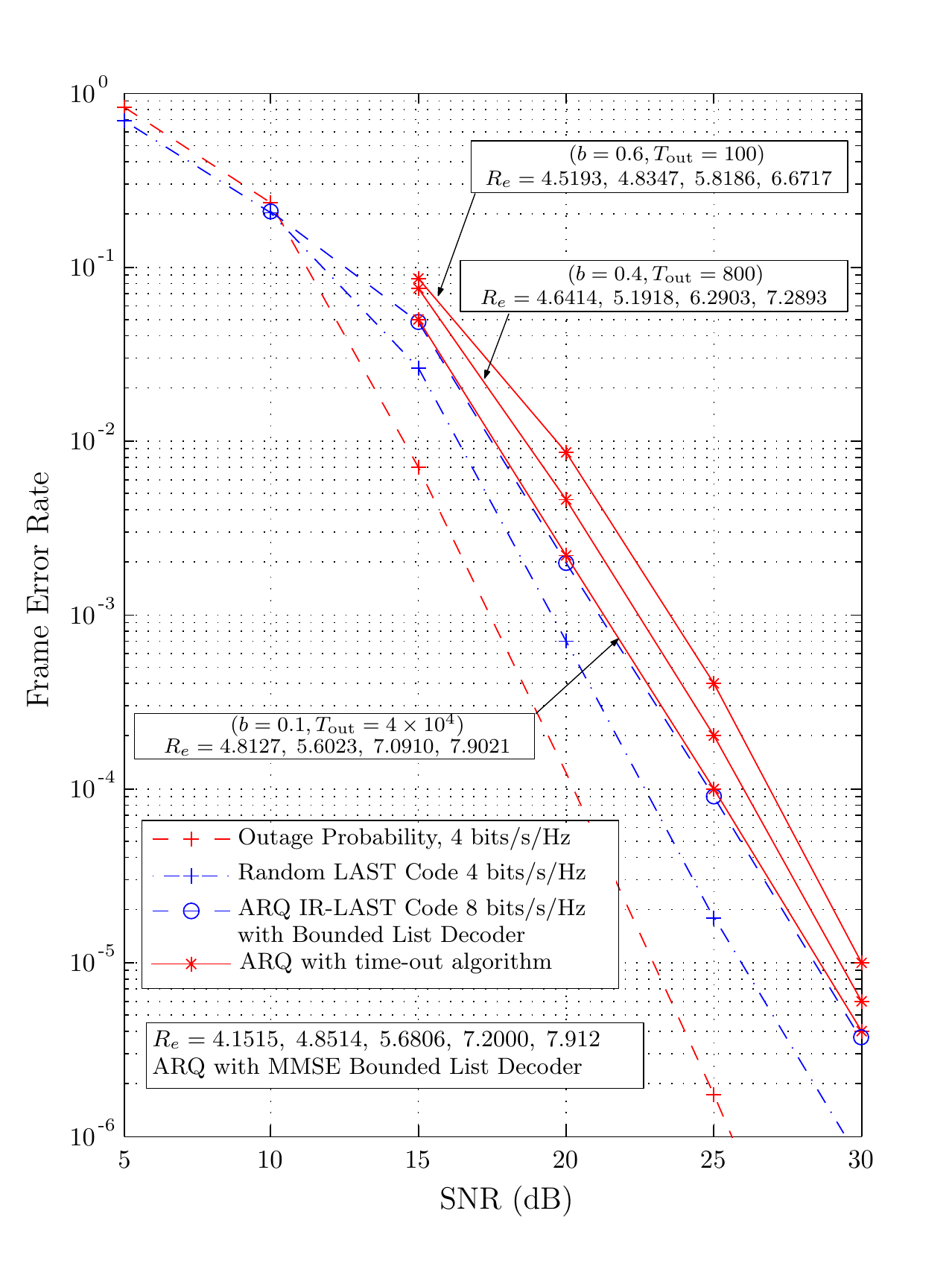}
\caption{The optimal tradeoff achieved by the time-out algorithm lattice stack sequential decoder for several values of $b$.}
\label{default}
\end{center}
\end{figure}
\begin{figure}[!ht]
\begin{center}
\includegraphics[width=5in]{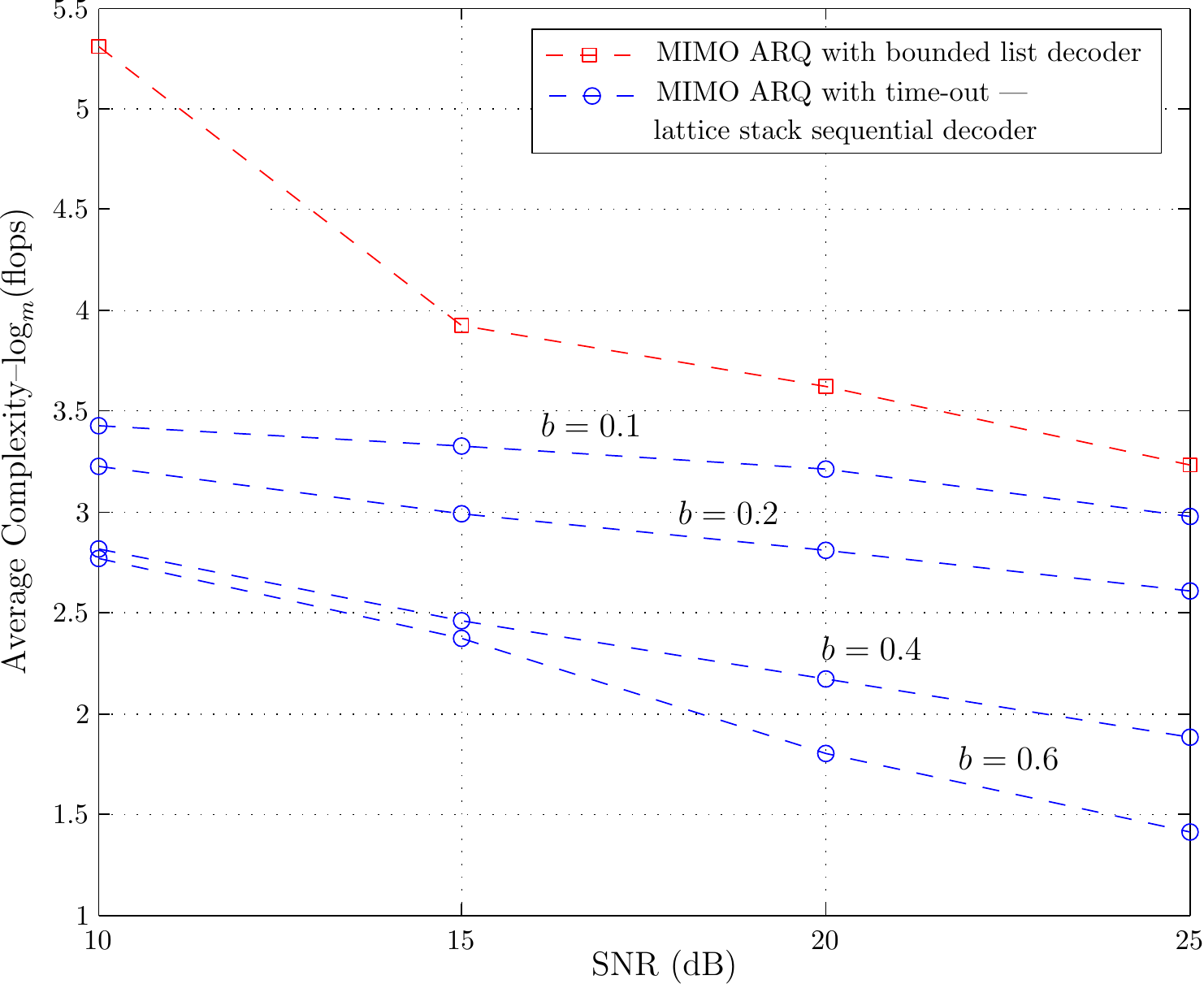}
\caption{Comparison of average computational complexity of the MMSE-DFE list lattice decoder and the time-out lattice stack sequential decoder for several values of $b$ using their corresponding optimal values of $\Gamma_{\rm out}$ (see Fig.~4).}
\end{center}
\end{figure}

\end{document}